\newtheorem{theorem}{Theorem}[section]
\newtheorem{lemma}[theorem]{Lemma}
\newtheorem{definition}[theorem]{Definition}
\newtheorem{remark}[theorem]{Remark}
\def\Tr{\mathrm{Tr}}
\def\id{{\bf 1}\!\!{\rm I}}
\title{Dynamics of Matrix Product States in the Heisenberg Picture: Projectivity, Ergodicity, and Mixing}
\begin{document}

%\begin{frontmatter}

\maketitle

\centerline{ \author{  Abdessatar Souissi}}

\centerline{  Department of Management Information Systems, College of Business and Economics, }
\centerline{Qassim University, Buraydah 51452, Saudi Arabia}

%\centerline{ Department of Mathematics, Preparatory Institute for Engineering Studies,}
%\centerline{ University of Monastir,  Monastir 5000, Tunisia}
\centerline{\textit{a.souaissi@qu.edu.sa}}
\vskip0.5cm

\centerline{\author{ Amenallah Andolsi}}

\centerline{  Nuclear Physics and High Energy Research Unit, Faculty of Sciences of Tunis,}
\centerline{ Tunis El Manar University, Tunis, Tunisia }

\centerline{\textit{amenallah.andolsi@fst.utm.tn }}

\tableofcontents

 \begin{abstract}
This paper introduces a Heisenberg picture approach to Matrix Product States (MPS), offering a rigorous yet intuitive framework to explore their structure and classification. MPS efficiently represent ground states of quantum many-body systems, with infinite MPS (iMPS) capturing long-range correlations and thermodynamic behavior. We classify MPS into projective and non-projective types, distinguishing those with finite correlation structures from those requiring ergodic quantum channels to define a meaningful limit. Using the Markov-Dobrushin inequality, we establish conditions for infinite-volume states and introduce ergodic and mixing MPS. As an application, we analyze the depolarizing MPS, highlighting its lack of finite correlations and the need for an alternative ergodic description. This work deepens the mathematical foundations of MPS and iMPS, providing new insights into entanglement, phase transitions, and quantum dynamics.
\end{abstract}

\textbf{Keywords:}
 Quantum theory, Matrix Product States, Ergodicity, Mixing,  Spin chain,  Quantum channel

\textbf{Subjectclass:}{ 46L55, 81P45, 81Q10, 82B10, 81R15}

%\end{frontmatter}
\section{Introduction}
Understanding the dynamics of highly entangled and strongly correlated quantum systems remains one of the most challenging problems in quantum theory. In quantum information, quantum channels offer a robust framework for modeling the evolution of quantum systems, providing critical insights into the dynamics of quantum processes and the transmission of information \cite{ SN24, SB24, Mov22}.

On the other hand, algebraic states—defined as positive, linear, and unital functionals on operator algebras—serve as a  powerful tool for characterizing both pure and mixed quantum states. These states play a central role in defining equilibrium through Kubo-Martin-Schwinger (KMS) states \cite{PS75,   Go83}, which generalize the concept of thermal equilibrium to infinite-dimensional systems. Quantum Markov chains (QMCs) \cite{  AccFri83, ASS20} further extend this framework by providing an effective way to describe the dynamics of quantum systems exhibiting Markovian behavior.

Hidden quantum Markov models (HQMMs) \cite{ AGLS24Q, SS23} offer a refined approach for analyzing stochastic processes with memory effects, governed by underlying QMCs. These models have shown particular promise in the emerging field of quantum machine learning, where they enable novel applications in areas such as pattern recognition, probabilistic inference, and optimization \cite{GSPEC19, MVVMSH18, Li24}.

Finitely correlated states \cite{fannes, fannes2, fannes3} provide an effective approach for modeling systems with finite-range correlations, significantly advancing  quantum  lattice models. It was demonstrated in \cite{fannes} that these states are equivalent to generalized valence bond states. Among these, the states exhibiting pure exponential clustering are particularly notable, as they arise as unique ground states of translation-invariant, finite-range Hamiltonians, which are characterized by a non-zero spectral gap.

Matrix Product States (MPS) \cite{perez2007matrix, verstraete2008matrix, orus2014practical, P19, G06} provide an effective framework for representing wave functions of one-dimensional quantum many-body systems. Initially introduced in the context of Density-Matrix Renormalization Group (DMRG) methods \cite{white1992density, schollwock2011density, Mcc07, P19, T10, B21}, the MPS formalism has since been widely extended and has proven highly effective in a range of machine learning applications \cite{Dh22, Meng23, LiZh23, Han18}. A Matrix Product State (MPS) with periodic boundary conditions is represented by a sequence of matrices \(\{A_i\}\), where each matrix corresponds to a local basis state \(|i\rangle\) of a Hilbert space \(\mathcal{H}\). MPS formalism provides a compact and efficient representation of ground states in one-dimensional quantum many-body systems. A finite MPS is given by
\begin{equation*}
|\psi_N\rangle = \sum_{i_1, i_2, \dots, i_n} \mathrm{Tr}\left( A_{i_1}^{[1]} A_{i_2}^{[2]} \cdots A_{i_n}^{[n]} \right) |i_1 i_2 \dots i_n\rangle \in \mathcal{H}^{\otimes n}
\end{equation*}
where the matrices $A_{i}^{[j]}$ encode the entanglement structure of the state. This framework naturally extends to infinite MPS (iMPS) \cite{B14, CS2010, C10}, which describe quantum states on infinite lattices and allow for the study of long-range correlations and thermodynamic limit behavior. A key feature of iMPS is their ability to capture asymptotic properties without being restricted by finite-size effects.
To rigorously analyze correlation structures in MPS, we define the sequence of algebraic states
\begin{equation*}
\varphi_n(X) = \frac{\big\langle \psi_{n+1} \big| X\otimes \id \big| \psi_{n+1}\big\rangle  }{\langle \Psi_{n+1} |\Psi_{n+1}\rangle}, \quad X\in\mathcal{B}_{[1,n]} = \bigotimes_{[1,n]} \mathcal{B}(\mathcal{H}).
\end{equation*}
This construction enables a systematic study of transition amplitudes and correlation decay properties. As $n \to \infty$, this sequence defines a well-posed thermodynamic limit, characterizing the emergence of ergodicity and mixing properties \cite{bau2013}. In particular, whether an iMPS exhibits finite or infinite correlation length depends on the spectral properties of the corresponding transfer operator.

This paper introduces a Heisenberg picture framework for MPS  by representing the sequence $\{\varphi_n\}_n$ as algebraic states on infinite tensor products of matrix algebras. This perspective uncovers fundamental structural properties and classifies MPS into two distinct categories. The first category, known as projective MPS, consists of states that satisfy the consistency condition
\begin{equation}\label{eq_proj}
   \varphi_{n+1} \big|_{\mathcal{B}_{[1,n]}} = \varphi_n
\end{equation}
ensuring that correlations are entirely determined by local finite sequences. Such states retain their correlation structure under thermodynamic limits without requiring additional constraints.

In contrast, non-projective MPS necessitate ergodic quantum channels to define their thermodynamic limit. Utilizing the Markov-Dobrushin inequality for quantum channels \cite{SB24, AcLuSo21}, we rigorously establish the existence of an infinite-volume limiting state and introduce the concepts of ergodic and mixing MPS. This distinction allows us to analyze the interplay between algebraic structure and dynamical properties in infinite tensor network states.
As a case study, we analyze the depolarizing MPS and demonstrate that it lacks a finite correlation structure. Consequently, it cannot be fully described through the correlations derived from $\{\varphi_n\}$, emphasizing the necessity of an alternative, ergodic framework for its characterization.

This work provides a rigorous algebraic foundation for MPS and iMPS, bridging their Heisenberg picture representation with ergodic quantum processes. Our findings contribute to a deeper understanding of phase transitions and long-range entanglement properties in quantum spin systems.

This framework allows for systematic comparisons between MPS and other classes of states, like EMCs \cite{AF05, SBS23, AMO06}, especially in terms of degree of entanglement. One  significant application is embedding the GHZ state within a quasi-local algebra, showcasing how MPS can represent maximal entanglement. This approach strengthens the links between MPS, QMCs, and finitely correlated states, setting the stage for deeper exploration of Markovian and non-Markovian behaviors in quantum systems \cite{Mov21, Mov22, P19}. Future work will focus on examining the ergodic properties and phase transitions of quantum spin chains through this MPS framework.

The paper is structured as follows: Section \ref{Sect_QLA} introduces the necessary preliminaries. In Section \ref{sect_MPSH}, we explore the Heisenberg representation of matrix product states (MPS). Section \ref{Sect_Main} rigorously develops the framework for projective matrix product states, with a particular focus on their application to the GHZ state. Entanglement and mixing properties of MPS are examined in Section \ref{Sect_ergMPS}, along with their thermodynamic limit. Finally, Section \ref{sect_depoMPS} presents illustrative results on ergodic MPS within the context of depolarizing MPS.

\section{Preliminaries }\label{Sect_QLA}

Let \( \mathcal{H} \) denote a finite-dimensional Hilbert space with an orthonormal basis \( \{ |i\rangle \}_{i=1}^{d} \). The algebra of bounded linear operators on \( \mathcal{H} \) is represented by \( \mathcal{B} = \mathcal{B}(\mathcal{H}) \), with the identity operator denoted as \( \id \).  For each $n\in\mathbb{N}$ by $\mathcal{B}_n$ we mean an isomorphic copy of $\mathcal{B}$. We consider the infinite tensor product of these algebras:
\(
\mathcal{B}_{\mathbb{N}} = \bigotimes_{n \in \mathbb{N}} \mathcal{B}_n.
\)
%the algebra \( \mathcal{B}_n \) is defined as the embedding of \( \mathcal{B} \) into the \( n \)-th position of the tensor product, given explicitly by \[\mathcal{B}_n = j_n(\mathcal{B}) = \id \otimes \cdots \otimes \id \otimes \mathcal{B} \otimes \id \otimes \cdots,\] with the algebra \( \mathcal{B} \) appearing at the \( n \)-th component, and identity operators occupying the other positions.
To represent algebras for finite subsystems, we define
\(
\mathcal{B}_{[1,n]} = \mathcal{B}_1 \otimes \mathcal{B}_2 \otimes \cdots \otimes \mathcal{B}_n,
\)
which describes the operator algebra for the first \( n \) subsystems. There exists a natural embedding
\(
\mathcal{B}_{[1,n]} \otimes \id_{n+1} \subset \mathcal{B}_{[1,n+1]},
\)
allowing for the inclusion of the algebra when extending from \( n \) to \( n+1 \) components. The set of local observables, which consists of all algebras acting on finite subsystems, is then defined as
\[
\mathcal{B}_{\mathbb{N}, \text{loc}} = \bigcup_{n} \mathcal{B}_{[1,n]}
\]
By employing the identification \( \mathcal{B}(\mathcal{H})^{\otimes n} \equiv \mathcal{B}(\mathcal{H}^{\otimes n}) \), we can interpret the elements of \( \mathcal{B}_{[1,n]} \) as operators acting on the Hilbert space \( \mathcal{H}^{\otimes n} \).
It is known that the infinite tensor product algebra \( \mathcal{B}_{\mathbb{N}} \) is the closure of the local algebra \( \mathcal{B}_{\mathbb{N}, \text{loc}} \) in the C$^*$-norm. For more details on quasi-local algebras, please refer to \cite{BR}.

The infinite system is described on the tensor product algebra \( \mathcal{B}_{\mathbb{N}} = \bigotimes_{\mathbb{N}} \mathcal{B}(\mathcal{K}) \), with the left shift \( \alpha \) defined as:
\[
\alpha(X_1 \otimes X_2 \otimes \cdots) = \id \otimes X_1 \otimes X_2 \otimes \cdots.
\]

Let $D\in\mathbb{N}$ and $\mathbb{M}_{D}(\mathbb{C})$ be the algebra of $D\times D$ matrices with complex entries.
A \emph{superoperator} \( \Phi: \mathbb{M}_{D}(\mathbb{C}) \to \mathbb{M}_{D}(\mathbb{C}) \) is a linear map acting on the space of \( D \times D \) complex matrices, denoted by \( \mathbb{M}_{D}(\mathbb{C}) \). The \emph{trace} of the superoperator \( \Phi \) is defined as:
\begin{equation}
    \operatorname{Tr}(\Phi) = \sum_{k,l} \big\langle e_{k} \big| \Phi(|e_k\rangle \langle e_l|) \big| e_l \big\rangle,
\end{equation}
where \( \{ e_k \} \) is the standard orthonormal basis of \( \mathbb{C}^{D} \), meaning \( e_k \) is the column vector with 1 in the \( k \)-th position and 0 elsewhere.

The trace of a superoperator quantifies how it acts on the space of bounded operators and is particularly useful in characterizing the structural properties of quantum channels. % In particular, this trace operation plays a crucial role in studying the behavior of completely positive maps, their fixed points, and their spectral properties in the context of quantum information theory.

\begin{definition}\label{def-const}
A sequence of states \( (\varphi_n)_n \) is considered projective with respect to the increasing sequence of subalgebras \( (\mathcal{B}_{[1,n]})_n \) if it satisfies the following criteria:
\begin{enumerate}
    \item For each integer \( n \), the map \( \varphi_n \) defines a state on \( \mathcal{B}_{[1,n]} \), meaning that it is a positive linear functional that is normalized as follows:
    \[
    \varphi_n: \mathcal{B}_{[1,n]} \rightarrow \mathbb{C}, \quad \varphi_n(\id) = 1, \quad \varphi_n(b^* b) \geq 0, \quad \forall b \in \mathcal{B}_{[1,n]}.
    \]
    This ensures that \( \varphi_n \) is a well-defined state on the finite region consisting of the first \( n \) subsystems.

    \item The sequence satisfies a consistency condition: When \( \varphi_{n+1} \) is restricted to the algebra \( \mathcal{B}_{[1,n]} \), it coincides with the state \( \varphi_n \). Mathematically, this is expressed as:
    \[
    \varphi_{n+1} \big|_{\mathcal{B}_{[1,n]}} = \varphi_n
    \]
\end{enumerate}

The projectivity condition ensures the compatibility of the states across different system sizes, indicating that the description of any finite subsystem \( [1, n] \) remains consistent when the subsystem is extended to a larger system \( [1, n+k] \).
\end{definition}

The \textit{partial trace} \( \Tr_{n]} \) is defined on the local algebra \( \mathcal{B}_{[1,n+k]} \) as the linear extension of:
\begin{equation}\label{eq_Trn}
\Tr_{n]}(X_{[1,n]} \otimes X_{[n+1, n+k]}) = X_{[1,n]} \Tr(X_{[n+1, n+k]}),
\end{equation}
where \( k \in \mathbb{N} \), \( X_{[1,n]} \in \mathcal{B}_{[1,n]} \), and \( X_{[n+1,n+k]} \in \mathcal{B}_{[n+1,n+k]} \). Here, \( \Tr \) denotes the normalized trace, defined as:
\[
\Tr(X) =   \sum_{i} \langle i | X | i \rangle,
\]
where \( \{|i\rangle\} \) is an orthonormal basis of the Hilbert space.

A linear map \( \Phi: \mathcal{B}(\mathcal{H}) \to \mathcal{B}(\mathcal{K}) \) is said to be \textit{completely positive} if, for any integer \( n \), the extended map \( \Phi \otimes \mathrm{id}_n \) preserves positivity, i.e.,
\[
(\Phi \otimes \mathrm{id}_n)(X) \geq 0, \quad \forall X \in \mathcal{B}(\mathcal{H} \otimes \mathbb{C}^n), \ X \geq 0,
\]
where \( \mathrm{id}_n \) is the identity map on \( \mathbb{C}^n \).

%A \textit{transition expectation} is a completely positive, identity-preserving (CPIP) map \( \mathcal{E}: \mathcal{A} \to \mathcal{B} \), where \( \mathcal{A} \) is a unital C*-algebra and \( \mathcal{B} \subseteq \mathcal{A} \) is a subalgebra. Such maps provide a framework for describing conditional expectations in quantum systems.
\begin{definition}
A \textit{quantum channel} is a linear map \( \Phi: \mathcal{B}(\mathcal{H}) \to \mathcal{B}(\mathcal{H}') \) that is both completely positive and trace-preserving (CPTP). For \( \mathcal{H} = \mathcal{H}' \), the channel acts within a single Hilbert space.

A \textit{Markov operator} is the dual of a quantum channel with respect to the Hilbert-Schmidt inner product. If the quantum channel \( \Phi \) is represented in terms of Kraus operators \( \{K_i\}_{i \in I} \) as:
\[
\Phi(\rho) = \sum_{i \in I} K_i \rho K_i^\dagger, \quad \rho \in \mathcal{B}(\mathcal{H}),
\]
the corresponding Markov operator \( \mathcal{T} \) acts on observables \( A \in \mathcal{B}(\mathcal{H}) \) as:
\[
\mathcal{T}(A) = \sum_{i \in I} K_i^\dagger A K_i.
\]
Markov operator
\end{definition}
Quantum Markov Chains (QMCs) and Finitely Correlated States (FCS) both describe quantum states with finite-range correlations.

A QMC is defined on the algebra $\mathcal{B}_{\mathbb{N}}$ using transition expectations $\mathcal{E}_n: \mathcal{B}_n \otimes \mathcal{B}_{n+1} \to \mathcal{B}_n$:
\[
\varphi(X_1 \otimes \cdots \otimes X_n) = \phi_0\big(\mathcal{E}_0\big(X_0 \otimes \mathcal{E}_1\big(\cdots \mathcal{E}_n\big(X_n \otimes \id_{n+1}\big)\big)\big)\big)\big)
\]
This ensures finite correlation length and Markovian dynamics.

FCS are translation-invariant states on $\mathcal{B}(\mathcal{H})^{\otimes \mathbb{N}}$, characterized by an auxiliary space $\mathcal{K}$, a CPIP map $\mathbb{E}: \mathcal{B}(\mathcal{H}) \otimes \mathcal{B}(\mathcal{K}) \to \mathcal{B}(\mathcal{K})$, and a state $\rho$ on $\mathcal{B}(\mathcal{K})$. Local expectations are given by:
\[
\varphi(X_1 \otimes \cdots \otimes X_n) = \rho\big(\mathbb{E}_{X_1} \circ \cdots \circ \mathbb{E}_{X_n}(\id_{\mathcal{K}})\big).
\]

Both structures use identity-preserving transition expectations to enforce finite correlation properties. By defining $\mathcal{E}(X_n\otimes X_{n+1}) = \mathbb{E}_{X_n}(\mathbb{E}_{X_{n+1}}(\id_{\mathcal{K}}))$, QMCs can be seen as a special case of FCS. Moreover, both satisfy the projectivity condition \eqref{eq_proj}, ensuring consistency in  the sense of Definition \ref{def-const}.
%This connection links quantum information, statistical mechanics, and tensor networks, providing a unified perspective on correlated quantum systems.

\section{ Matrix Product States in  Quantum Spin Chains}\label{sect_MPSH}
 Consider a sequence of  $D\times D$ matrices \( \{A_{i_k}^{[k]}\}_{0\le i_k \le d} \),  associated with the \( k \)-th site, and \( |i_k\rangle \) denotes the local basis states. corresponding to the local basis of the Hilbert space \( \mathcal{H} \). Matrix Product States (MPS) are an efficient way to represent quantum many-body states, particularly in one-dimensional lattice systems. An MPS is expressed as:
 \begin{equation}\label{psn_n}
|\psi_n\rangle = \sum_{i_1, i_2, \ldots, i_n} \operatorname{Tr}(A_{i_1}^{[1]} A_{i_2}^{[2]} \cdots A_{i_n}^{[n]}) |i_1, i_2, \ldots, i_n\rangle,
\end{equation}
where \( A_{i_k}^{[k]} \)  The bond dimension $D$ of the matrices \( A_{i_k}^{[k]} \) determines the degree of entanglement captured by the state, with larger bond dimensions allowing for greater entanglement.

 These matrices satisfy a   gauge condition:
\begin{equation}\label{eq_sumAkAkd1}
\sum_{i \in I} A_i^{[k]} A_i^{[k]\, \dagger} = \id_{D}
\end{equation}

 The normalization constant is computed as:
\[
\mathcal{N}(n) = \sum_{i_1, i_2, \ldots, i_n} \big|\operatorname{Tr}\big(A_{i_1}^{[1]} A_{i_2}^{[2]} \cdots A_{i_n}^{[n]} \big)\big|^2
\]
%Thus, the normalized state is given by: \[|\psi_n\rangle  = \frac{1}{\sqrt{\mathcal{N}(n)}}|\Psi_n\rangle\]

Matrix Product States (MPS) can be naturally formulated in the Heisenberg picture, where the focus shifts from the evolution of states to the evolution of observables. In this framework, for an observable \( X \) acting on the first \( n \) sites, its expectation value in the Heisenberg picture is given by:

\begin{equation}\label{eq_varphi}
\varphi_n(X) = \frac{\big\langle \Psi_{n+1} \big| X \otimes \id_{n+1} \big| \Psi_{n+1} \big\rangle}{\mathcal{N}(n+1)},
\end{equation}

where \( X \in \mathcal{B}_{[1,n]} \) is an operator acting on the first \( n \) sites, and the identity operator \(\id_{n+1}\) acts on the \((n+1)^{th}\) site. This formulation allows us to define a sequence of states \(\{\varphi_n\}_n\) that captures the statistical and correlation properties of the MPS as the system size grows.

A fundamental question in the study of MPS is whether this sequence has a well-defined limit as \( n \to \infty \), leading to an infinite-volume MPS. The existence of such a limiting state characterizes the asymptotic behavior of the system, revealing long-range correlations and ergodic properties.

\begin{definition}
The sequence of states \(\{\varphi_n\}_n\) given by (\ref{eq_varphi}) defines an infinite-volume matrix product state on the spin chain if the limit

\begin{equation}\label{eq-philim}
\varphi = \lim_{n \to \infty} \varphi_n
\end{equation}

exists and defines a state on the algebra \( \mathcal{B}_{\mathbb{N}} \). This limiting state \(\varphi\) describes the thermodynamic behavior of the system, providing a rigorous framework for studying its large-scale structure.
\end{definition}

For each $n$   the map
\begin{equation}\label{eq-Phin}
  \Phi_n(M) = \sum_{i} A_{i}^{[n]\, \dagger} M A_{i}^{[n]}; \qquad \forall M\in\mathbb{M}_D(\mathbb{C})
\end{equation}
is a CP map from $\mathbb{M}_D(\mathbb{C})$ into itself. If the gauge condition  A recursive composition of  $\{\Phi_k\}_k$ leads to
$$
\Phi_n\circ\cdots\circ  \Phi_1(M)= \sum_{i_1, i_2,\ldots, i_n} A_{i_n}^{[k]\, \dagger}\cdots A_{i_1}^{[1]\, \dagger} M  A_{i_1}^{[1]}\cdots A_{i_n}^{[k]}; \qquad \forall M\in\mathbb{M}_D(\mathbb{C})
$$
One can check that
$$
 \mathcal{N}(n) = \Tr(\Phi_n\circ\cdots\circ  \Phi_1)
$$
 For every observable $X\in\mathcal{B}_{[m,n]}$, we define the  associated linear map $\hat{X}$ from $\mathbb{M}_D(\mathbb{C})$ into itself by:
 \begin{equation}\label{eq-hatX}
   \hat{X}(M) = \sum_{\substack{i_m,\dots,i_{n} \\ j_m,\dots,j_{n}}}\big\langle i_m,\ldots, i_n \big| X \big| j_m,\ldots, j_n\big\rangle A_{i_n}^{[n]\, \dagger}\cdots A_{i_m}^{[m]\, \dagger} M  A_{j_m}^{[m]}\cdots A_{j_n}^{[n]}
 \end{equation}
 For $X = \id_{[m,n]}$ one has $\hat{\id}_{[m,n]}(M) = \Phi_{n}\circ\cdots\circ \Phi_m(M)$.  For two observables $X$ and $Y$ localized in separated region one can check that
 $$
 \hat{X}\circ \hat{Y} = \widehat{X\otimes Y}
 $$

One on the maim purposes is to find the local correlations of the limiting state of $(\varphi_n)_n$. Such as the case of quantum Markov chains and fimitely correlated states.

The Heisenberg picture is particularly useful for studying the thermodynamic limit of MPS, as it provides insight into long-range correlations, decay of off-diagonal terms, and ergodic properties. Moreover, this framework enables a natural connection between MPS and other representations of quantum many-body systems, such as quantum Markov chains and finitely correlated states.

\section{Projective Matrix product states}\label{Sect_Main}

In this section, we establish a rigorous construction for a state on an infinite tensor product of matrix algebras, formulated using a sequence of matrix product operators. This construction allows us to define a state with preserved correlation structure across an infinite lattice, extending naturally to applications in quantum information and many-body physics.
  We aim to find a suitable condition that grantees the infinite limit of the sequence of states $\{\varphi_n\}_n$  given by (\ref{eq_varphi}) and study its  correlation functions and later its  ergodic properties.\\
To proceed, we introduce the following technical lemma, which will be useful in subsequent calculations.

\begin{lemma}
   For any \( n, k \in \mathbb{N} \) and for any index sequences \( i_1, i_2, \dots, i_{n+k} \) and \( j_1, j_2, \dots, j_{n+k} \), the following identity holds:
  \begin{equation}\label{eq_trkl}
  \overline{\operatorname{Tr}\left( A_{i_1}^{[1]} A_{i_2}^{[2]} \cdots A_{i_{n+k}}^{[n+k]} \right)} \operatorname{Tr}\left( A_{j_1}^{[1]} A_{j_2}^{[2]} \cdots A_{j_{n+k}}^{[n+k]} \right)
  \end{equation}
  $$
   = \operatorname{Tr}\left[ \left( {A_{i_{n}}^{[n]}}^{\dagger} \cdots {A_{i_{1}}^{[1]}}^{\dagger} \otimes A_{j_1}^{[1]} \cdots A_{j_{n}}^{[n]} \right) \left(
   {A_{i_{n+k}}^{[n+k]}}^{\dagger} \cdots {A_{i_{n+1}}^{[n+1]}}^{\dagger} \otimes A_{i_{n+1}}^{[n+1]} \cdots A_{i_{n+k}}^{[n+k]} \right) \right]
  $$
\end{lemma}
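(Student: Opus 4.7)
The identity is really an identity about a single product of two traces, rearranged via the standard Hilbert–Schmidt trick $\operatorname{Tr}(P)\operatorname{Tr}(Q) = \operatorname{Tr}(P \otimes Q)$, together with $(P_1 P_2)\otimes(Q_1 Q_2) = (P_1\otimes Q_1)(P_2\otimes Q_2)$. No deep structure is used: the only inputs are elementary properties of the trace, the adjoint, and the tensor product of matrices. I do not expect any genuine obstacle; the only care needed is bookkeeping of the index order.

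\textbf{Step 1: replace the complex conjugate by an adjoint trace.} Since the trace of a matrix $M$ satisfies $\overline{\operatorname{Tr}(M)} = \operatorname{Tr}(M^{\dagger})$, and since $(A_{i_1}^{[1]} A_{i_2}^{[2]} \cdots A_{i_{n+k}}^{[n+k]})^{\dagger} = A_{i_{n+k}}^{[n+k]\,\dagger} \cdots A_{i_1}^{[1]\,\dagger}$, the left-hand side of \eqref{eq_trkl} is exactly
\[
\operatorname{Tr}\!\bigl(A_{i_{n+k}}^{[n+k]\,\dagger} \cdots A_{i_1}^{[1]\,\dagger}\bigr)\,\operatorname{Tr}\!\bigl(A_{j_1}^{[1]} \cdots A_{j_{n+k}}^{[n+k]}\bigr).
\]

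\textbf{Step 2: combine the two traces into one trace on the tensor product.} Applying $\operatorname{Tr}(P)\operatorname{Tr}(Q) = \operatorname{Tr}(P\otimes Q)$ on $\mathbb{M}_D(\mathbb{C})\otimes\mathbb{M}_D(\mathbb{C}) \cong \mathbb{M}_{D^2}(\mathbb{C})$ turns the product above into
\[
\operatorname{Tr}\!\Bigl[\bigl(A_{i_{n+k}}^{[n+k]\,\dagger} \cdots A_{i_1}^{[1]\,\dagger}\bigr)\,\otimes\,\bigl(A_{j_1}^{[1]} \cdots A_{j_{n+k}}^{[n+k]}\bigr)\Bigr].
\]

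\textbf{Step 3: split the product at position $n$ on each tensor leg.} Write each of the two matrix products as a product of a block indexed by $1,\dots,n$ and a block indexed by $n+1,\dots,n+k$. On the first leg this gives
$\bigl(A_{i_{n+k}}^{[n+k]\,\dagger} \cdots A_{i_{n+1}}^{[n+1]\,\dagger}\bigr)\bigl(A_{i_n}^{[n]\,\dagger}\cdots A_{i_1}^{[1]\,\dagger}\bigr)$,
and on the second leg $\bigl(A_{j_1}^{[1]}\cdots A_{j_n}^{[n]}\bigr)\bigl(A_{j_{n+1}}^{[n+1]}\cdots A_{j_{n+k}}^{[n+k]}\bigr)$. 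Using the multiplicativity $(P_1 P_2)\otimes(Q_1 Q_2) = (P_1\otimes Q_1)(P_2\otimes Q_2)$, these two tensor products assemble into the product of the two factors displayed in the right-hand side of \eqref{eq_trkl}, and cyclicity of the trace lets me put the ``low indices'' block first, exactly matching the stated form.

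\textbf{Expected obstacle.} There is none of substance; the only subtlety is the order in which the daggered matrices appear after taking the adjoint and how one chooses to split and recombine. A careful alignment of indices in Step~3 — in particular, keeping the order $i_n,\dots,i_1$ on the first leg and $j_1,\dots,j_n$ on the second leg — is what produces the specific form of the identity. (Note that the second factor on the right-hand side as written uses the same indices $i_{n+1},\dots,i_{n+k}$ on both tensor legs, which looks like a typographical slip for $j_{n+1},\dots,j_{n+k}$ on the second leg; the argument above produces the latter, mathematically correct, version.)
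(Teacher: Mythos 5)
Your proof is correct and follows essentially the same route as the paper's: rewrite the conjugated trace as the trace of the reversed product of adjoints, split the products at position $n$ using cyclicity, and merge the two traces via $\operatorname{Tr}(P)\operatorname{Tr}(Q)=\operatorname{Tr}(P\otimes Q)$ together with $(P_1P_2)\otimes(Q_1Q_2)=(P_1\otimes Q_1)(P_2\otimes Q_2)$. Your side remark is also accurate: as printed, the second tensor leg of the second factor on the right-hand side should carry the indices $j_{n+1},\dots,j_{n+k}$ rather than $i_{n+1},\dots,i_{n+k}$ — a slip present in both the paper's statement and its proof, harmless in the later application where $X\otimes\id$ forces the indices beyond site $n$ to coincide, and your derivation yields the corrected form.
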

\begin{proof}
  To prove this identity, we will apply the conjugate, cyclic, and tensor product properties of the trace%. For any \( A, B, C, D \) matrices of compatible dimensions, recall that \( \operatorname{Tr}(AB) = \operatorname{Tr}(BA) \) and \( \operatorname{Tr}(C \otimes D) = \operatorname{Tr}(C) \operatorname{Tr}(D) \).

  Consider the left-hand side of Equation \eqref{eq_trkl}:
  \[
  \overline{\operatorname{Tr}\left( A_{i_1}^{[1]} A_{i_2}^{[2]} \cdots A_{i_{n+k}}^{[n+k]} \right)} \operatorname{Tr}\left( A_{j_1}^{[1]} A_{j_2}^{[2]} \cdots A_{j_{n+k}}^{[n+k]} \right)
  \]
  Taking the conjugate of the trace term, we have
  \[
  \overline{\operatorname{Tr}\left( A_{i_1}^{[1]} A_{i_2}^{[2]} \cdots A_{i_{n+k}}^{[n+k]} \right)} = \operatorname{Tr}\left( {A_{i_{n+k}}^{[n+k]}}^{\dagger} \cdots {A_{i_1}^{[1]}}^{\dagger} \right)
  \]
  Using this, the expression becomes
  \[
  \operatorname{Tr}\left( {A_{i_{n+k}}^{[n+k]}}^{\dagger} \cdots {A_{i_1}^{[1]}}^{\dagger} \right) \operatorname{Tr}\left( A_{j_1}^{[1]} A_{j_2}^{[2]} \cdots A_{j_{n+k}}^{[n+k]} \right)
  \]
  By splitting the sequences at index \( n \) in the conjugated  terms, we get
  \[
  \operatorname{Tr}\left( {A_{i_{n+k}}^{[n+k]}}^{\dagger} \cdots {A_{i_1}^{[1]}}^{\dagger} \right) = \operatorname{Tr}\left( {A_{i_{n}}^{[n]}}^{\dagger} \cdots {A_{i_1}^{[1]}}^{\dagger} {A_{i_{n+k}}^{[n+k]}}^{\dagger} \cdots {A_{i_{n+1}}^{[n+1]}}^{\dagger} \right)
  \]
  Finally, by using the  tensor property of the trace, we can combine the two traces on the left-hand side of Equation \eqref{eq_trkl} under a single trace as
  \[
  \operatorname{Tr}\left[ \left( {A_{i_{n}}^{[n]}}^{\dagger} \cdots {A_{i_1}^{[1]}}^{\dagger} \otimes A_{j_1}^{[1]} \cdots A_{j_{n}}^{[n]} \right) \left(
   {A_{i_{n+k}}^{[n+k]}}^{\dagger} \cdots {A_{i_{n+1}}^{[n+1]}}^{\dagger} \otimes A_{i_{n+1}}^{[n+1]} \cdots A_{i_{n+k}}^{[n+k]} \right) \right]
  \]
  This completes the proof.
\end{proof}

\begin{theorem}\label{thm_main}
Let \( \{A_{i_n}^{[n]}\}_{n} \) be a family of \( D \times D \) matrices, and let \( \Phi_n \) be defined as in Equation~(\ref{eq-Phin}). Suppose that for all \( n \in \mathbb{N} \), the operator identity holds:
\begin{equation}\label{eq_operator_identity}
    \sum_{j} A_{j}^{[n+1] \dagger} A_{i}^{[n] \dagger} \otimes A_{i}^{[n]} A_{j}^{[n+1]} = A_{i}^{[n] \dagger} \otimes A_{i}^{[n]}
\end{equation}
Then, the sequence \( \{\Phi_n\}_n \) satisfies:
\begin{equation}\label{eq-Phnn1}
    \operatorname{Tr}(\Phi_{n+1} \circ \Phi_{n} \circ \Phi) = \operatorname{Tr}(\Phi_n \circ \Phi), \quad \forall n \in \mathbb{N}
\end{equation}
for every CP map \( \Phi: \mathbb{M}_D(\mathbb{C}) \to \mathbb{M}_D(\mathbb{C}) \).
Moreover the sequence of states \(\{\varphi_n\}_n\) given by (\ref{eq_varphi}) is projective, and  for any \( N\in\mathbb{N} \) and local observable \( X \in \mathcal{B}_{[1,N]} \), the thermodynamic limit:
\begin{equation}\label{eq_state_limit}
    \varphi(X) := \lim_{n\to\infty} \frac{\langle \Psi_n |X\otimes \id| \Psi_n \rangle}{\langle \Psi_n |  \Psi_n \rangle} = \frac{\operatorname{Tr}(\Phi_{N+1} \circ \hat{X})}{\operatorname{Tr}(\Phi_1)}
\end{equation}
defines a quantum state \( \varphi \) on \( \mathcal{B}_{\mathbb{N}} \).
\end{theorem}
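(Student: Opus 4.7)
My plan is to turn the tensor-product hypothesis (\ref{eq_operator_identity}) into a single clean statement at the level of the superoperators $\Phi_n$, after which everything else falls out by a telescoping argument. First I would invoke the universal property of the tensor product: any identity in $\mathbb{M}_D(\mathbb{C})\otimes\mathbb{M}_D(\mathbb{C})$ can be paired against the bilinear ``sandwich'' map $(K,L)\mapsto(M\mapsto KML)$. Applying this to (\ref{eq_operator_identity}) yields, for each $i$ and each $M$, the pointwise equality $\sum_j A_j^{[n+1]\dagger}A_i^{[n]\dagger}MA_i^{[n]}A_j^{[n+1]}=A_i^{[n]\dagger}MA_i^{[n]}$; summing over $i$ gives $\Phi_{n+1}\circ\Phi_n=\Phi_n$, of which (\ref{eq-Phnn1}) is an immediate consequence. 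A one-line induction then promotes this ``absorption'' property to the telescoping rule $\Phi_m\circ\Phi_{m-1}\circ\cdots\circ\Phi_n=\Phi_n$ for every $m\geq n$.

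Next I would re-express both the numerator and the denominator of $\varphi_n(X)$ as traces of superoperators. A direct unfolding of the definitions, using the preceding lemma together with the formula for $\Tr(\Phi)$, yields $\langle\Psi_k|Y|\Psi_k\rangle=\Tr(\hat{Y})$ for every $Y\in\mathcal{B}_{[1,k]}$, so that $\mathcal{N}(n+1)=\Tr(\Phi_{n+1}\circ\cdots\circ\Phi_1)$ and, for $X\in\mathcal{B}_{[1,N]}$ and $n\geq N$, $\langle\Psi_{n+1}|X\otimes\id|\Psi_{n+1}\rangle=\Tr(\widehat{X\otimes\id_{[N+1,n+1]}})$. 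Chasing the matrix products in (\ref{eq-hatX}) and (\ref{eq-Phin}), the required compatibility reads $\widehat{X\otimes\id_{[N+1,n+1]}}=\hat{\id}_{[N+1,n+1]}\circ\hat{X}=\Phi_{n+1}\circ\cdots\circ\Phi_{N+1}\circ\hat{X}$. The telescoping rule collapses the string of $\Phi$'s on the left of $\hat X$ down to $\Phi_{N+1}$, and the analogous application to $\mathcal{N}(n+1)$ collapses it to $\Tr(\Phi_1)$. Cyclicity of the superoperator trace then rewrites the numerator as $\Tr(\Phi_{N+1}\circ\hat X)$.

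Putting the pieces together, $\varphi_n(X)=\Tr(\Phi_{N+1}\circ\hat X)/\Tr(\Phi_1)$ for every $n\geq N$, so the sequence is eventually constant in $n$; this simultaneously gives the thermodynamic limit (\ref{eq_state_limit}) and the projectivity $\varphi_{n+1}|_{\mathcal{B}_{[1,n]}}=\varphi_n$ in the sense of Definition~\ref{def-const}. Linearity, positivity, and normalization of $\varphi$ on $\mathcal{B}_{\mathbb{N},\text{loc}}$ are inherited from the finite-volume states, and the bound $\|\varphi\|=1$ allows a unique continuous extension to the C$^*$-closure $\mathcal{B}_{\mathbb{N}}$. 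The step I expect to be most delicate is the bookkeeping in the second paragraph: tracking which copy of $A^{[p]}_{i_p}$ sits on which side of $M$ in $\hat{\id}\circ\hat X$ versus $\widehat{X\otimes\id}$ (the informal identity $\hat X\circ\hat Y=\widehat{X\otimes Y}$ stated after (\ref{eq-hatX}) must be read in the appropriate order). Beyond that, all algebraic content reduces to the single input $\Phi_{n+1}\circ\Phi_n=\Phi_n$.
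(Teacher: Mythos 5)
Your proposal is correct, and its second half --- rewriting $\langle\Psi_{n+1}|X\otimes\id|\Psi_{n+1}\rangle$ as $\Tr(\Phi_{n+1}\circ\cdots\circ\Phi_{N+1}\circ\hat X)$ and $\mathcal{N}(n+1)$ as $\Tr(\Phi_{n+1}\circ\cdots\circ\Phi_{1})$, then collapsing both --- follows the same architecture as the paper's. Where you genuinely diverge is the treatment of the hypothesis \eqref{eq_operator_identity}. The paper never leaves the level of traces: it Kraus-decomposes the auxiliary CP map $\Phi$, uses the trace-product lemma (Equation \eqref{eq_trkl}) to package $\Tr(\Phi_{n+1}\circ\Phi_n\circ\Phi)$ as $\sum_{i,\ell}\Tr\big((K_\ell^\dagger\otimes K_\ell)\sum_j A_j^{[n+1]\dagger}A_i^{[n]\dagger}\otimes A_i^{[n]}A_j^{[n+1]}\big)$, and only then invokes the hypothesis. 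You instead push \eqref{eq_operator_identity} through the sandwich map $B\otimes C\mapsto(M\mapsto BMC)$ --- legitimate, since this map is well defined and linear (indeed bijective) on $\mathbb{M}_D(\mathbb{C})\otimes\mathbb{M}_D(\mathbb{C})$ --- and obtain the strictly stronger superoperator identity $\Phi_{n+1}\circ\Phi_n=\Phi_n$. This buys you several things: \eqref{eq-Phnn1} holds for arbitrary linear $\Phi$ with no Kraus decomposition and no trace lemma; the telescoping $\Phi_{n+1}\circ\cdots\circ\Phi_{N+1}=\Phi_{N+1}$ happens inside the composition before any trace is taken, so numerator and denominator collapse in one line; and, since the sandwich map is injective, \eqref{eq_operator_identity} is revealed to be \emph{equivalent} to the absorption property, clarifying what the hypothesis really asserts. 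Your caution about the ordering in $\hat X\circ\hat Y=\widehat{X\otimes Y}$ is warranted: for $X$ supported to the left of $Y$ the correct reading is $\widehat{X\otimes Y}=\hat Y\circ\hat X$, which is the order you use. The appeal to cyclicity of the superoperator trace at the end of your second paragraph is superfluous once the telescoping has already reduced the composition to $\Phi_{N+1}\circ\hat X$, but it is harmless.
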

\begin{proof}
Let \( \Phi(M) = \sum_{\ell \in J} K_{\ell}^{\dagger} M K_{\ell} \) be a completely positive (CP) map on \( \mathbb{M}_D(\mathbb{C}) \), where \( J \) is a finite index set.

Utilizing the definition of \( \Phi_n \) from (\ref{eq-Phin}), we obtain:
\begin{equation*}
    \Phi_{n+1} \circ \Phi_n \circ \Phi(M) = \sum_{i,j,\ell} A_{j}^{[n+1] \dagger} A_{i}^{[n] \dagger} K_{\ell}^{\dagger} M K_{\ell} A_{i}^{[n]} A_{j}^{[n+1]}
\end{equation*}
This recursive formulation highlights the structured evolution of the CP map under sequential compositions, preserving its complete positivity and operational consistency.

Taking the trace and using linearity, we get:
\begin{equation*}
    \Tr(\Phi_{n+1} \circ \Phi_n \circ \Phi) = \sum_{i,j,\ell} \Tr\big(A_{j}^{[n+1] \dagger} A_{i}^{[n] \dagger} K_{\ell}^{\dagger}\big) \Tr\big(K_{\ell} A_{i}^{[n]} A_{j}^{[n+1]} \big).
\end{equation*}

Using the cyclic property of the trace and the identity \eqref{eq_trkl}, we simplify:
\begin{equation*}
    \Tr(\Phi_{n+1} \circ \Phi_n \circ \Phi) = \sum_{i,\ell} \Tr\big( (K_{\ell}^{\dagger} \otimes K_{\ell}) \sum_{j} A_{j}^{[n+1] \dagger} A_{i}^{[n] \dagger} \otimes A_{i}^{[n]} A_{j}^{[n+1]} \big).
\end{equation*}

Applying the consistency condition \eqref{eq_operator_identity}, we further reduce:
\begin{equation*}
    \Tr(\Phi_{n+1} \circ \Phi_n \circ \Phi) = \sum_{i,\ell} \Tr\big( (K_{\ell}^{\dagger} \otimes K_{\ell}) (A_{i}^{[n] \dagger} \otimes A_{i}^{[n]}) \big).
\end{equation*}

Using the decomposition property of the trace, we finally obtain:
\begin{equation*}
    \Tr(\Phi_{n+1} \circ \Phi_n \circ \Phi) = \Tr(\Phi_n \circ \Phi),
\end{equation*}

 Next, we consider a local observable \( X \in \mathcal{B}_{[1,N]} \). For any \( k \), the expected value \( \big\langle \Psi_{N+k+1} \big|X\otimes \id\big| \Psi_{N+k+1}\big\rangle \) is given by
\begin{align*}
\big\langle \Psi_{N+k+1} \big|X\otimes \id\big| \Psi_{N+k+1}\big\rangle &= \sum_{\substack{i_1, i_2, \dots, i_{N+k} \\ j_1, j_2, \dots, j_{N+k+1}}}
\quad   \langle i_1, i_2, \dots, i_{N+k+1} | X \otimes \id | j_1, j_2, \dots, j_{N+k+1} \rangle\\
&\qquad \times \overline{\Tr\left( A_{i_1}^{[1]}   \cdots A_{i_{N+k+1}}^{[N+k+1]} \right)} \Tr\left( A_{j_1}^{[1]}   \cdots A_{j_{N+k+1}}^{[N+k+1]} \right) \\
&= \sum_{\substack{i_1, i_2, \dots, i_{N} \\ j_1, j_2, \dots, j_{N}}}\sum_{i_{N+1},\dots, i_{N+k}}\qquad \langle i_1, i_2, \dots, i_{N} | X   | j_1, j_2, \dots, j_{N} \rangle\\&\qquad\times\Tr\left(A_{i_{N+k}}^{[N+k]\, \dagger}\cdots  A_{i_{N+1}}^{[N+1]\, \dagger}A_{i_N}^{[N]\, \dagger}\cdots  A_{i_1}^{[1]\, \dagger}\right) \\ &\qquad\times\Tr\left( A_{j_1}^{[1]}\cdots A_{j_N}^{[N]} A_{i_{N+1}}^{[N+1]}\cdots A_{i_{N+k}}^{[N+k]} \right) \\
\end{align*}
Considering that for every integer $n$ one has
\begin{align*}
 \overline{\Tr\left( A_{i_1}^{[1]}   \cdots A_{i_{n}}^{[n]} \right)} \Tr\left( A_{j_1}^{[1]}   \cdots A_{j_{n}}^{[n]} \right) &= \sum_{\alpha,\beta=1}^{D}\big\langle e_{\alpha} \big|A_{i_{n}}^{[n]\,\dagger}\cdots A_{i_1}^{[1]\,\dagger}\big|e_{\alpha}\big\rangle\big\langle e_{\beta}| A_{j_1}^{[1]}   \cdots A_{j_{n}}^{n}| e_{\beta}\big\rangle\\
 & = \sum_{\alpha,\beta=1}^{D}\Tr\big(\Phi_{i_n,j_n}\circ\cdots\circ \Phi_{i_1,j_1}\big)
\end{align*}
where $\Phi_{i_m,j_m}: M\in\mathbb{M}_{D}(\mathbb{C}) \mapsto A_{i_{m}}^{\dagger} M A_{j_{m}}$ and the expression of $\hat{X}$ given in (\ref{eq-hatX}) one gets
\[
\big\langle \Psi_{N+k+1} \big|X\otimes \id\big| \Psi_{N+k+1}\big\rangle = \Tr\big(\Phi_{N+k}\circ\cdots\circ\Phi_{N+1}\circ\hat{X}\big)
\]
For a local observable \( X \in \mathcal{B}_{[1,N]} \), we express the expectation value as:
\begin{equation*}
    \varphi_{N+k}(X) = \frac{\Tr\big(\Phi_{N+k} \circ \cdots \circ \Phi_{N+1} \circ \hat{X}\big)}{\mathcal{N}_{N+k}}.
\end{equation*}

Using the iterative property of \eqref{eq-Phnn1}, we obtain:
\begin{equation*}
    \Tr\big(\Phi_{N+k} \circ \cdots \circ \Phi_{N+1} \circ \hat{X} \big) = \Tr\big(\Phi_{N+1} \circ \hat{X}\big).
\end{equation*}

In particular, the normalization factor $\mathcal{N}_{N+k}= \Tr\big(\Phi_{N+k} \circ \cdots \circ \Phi_{1}\big)$ simplifies as:
\begin{equation*}
    \mathcal{N}_{N+k} = \mathcal{N}_{N+1} = \mathcal{N}_1 =  \Tr(\Phi_1),
\end{equation*}
we conclude:
\begin{equation*}
    \varphi_{N+k}(X) = \varphi_{N} = \frac{\Tr(\Phi_{N+1} \circ \hat{X})}{\Tr(\Phi_1)}.
\end{equation*}

This confirms that the sequence \( \{\varphi_n\}_n \) remains consistent, guaranteeing that the limit \eqref{eq_state_limit} exists for all local observables \( X \). Consequently, \( \varphi \) defines a well-posed state on \( \mathcal{B}_{\mathbb{N}, \text{loc}} \), naturally extending to a state on the full algebra \( \mathcal{B}_{\mathbb{N}} \).
\end{proof}
\begin{remark}
This result offers a structured approach to defining states on an infinite tensor product space, where the matrix product structure satisfies clear consistency conditions. In particular, Equation  \eqref{eq_operator_identity} aligns with consistency conditions in  the sense of Definition \ref{def-const} studied in the context of QMCs  on trees, especially for Pauli models associated with phase transitions \cite{MBS161, MS19, MSHA22}. This parallel opens a potential pathway for extending Theorem \ref{thm_main} to  MPS on tree structures, with possible applications in quantum information theory and statistical mechanics.
\end{remark}

\subsection{Application to the GHZ State}\label{Sect_GHZ}

In this section, we apply the main theorem to the GHZ state, a prototypical example of a maximally entangled state, and construct the necessary Kraus operators \( K_i \) to satisfy the theorem’s conditions. We aim to capture the entanglement structure and the non-Markovian behavior of the GHZ state using the formalism of quantum Markov chains.

Let \( \mathcal{H} = \mathbb{C}^2 \) be a two-dimensional Hilbert space with the standard orthonormal basis given by
\[
|0\rangle = \begin{pmatrix} 1 \\ 0 \end{pmatrix}, \quad |1\rangle = \begin{pmatrix} 0 \\ 1 \end{pmatrix}
\]
The algebra of bounded operators on \( \mathcal{H} \) is denoted by \( \mathcal{B} = \mathcal{B}(\mathcal{H}) \), and the quasi-local algebra for an infinite sequence of qubits is defined by the infinite tensor product
\[
\mathcal{B}_{\mathbb{N}} = \bigotimes_{n \in \mathbb{N}} \mathcal{B}
\]
The \( n \)-qubit GHZ state is defined as
\[
|\text{GHZ}\rangle_n = \frac{1}{\sqrt{2}} \left( |0\rangle^{\otimes n} + |1\rangle^{\otimes n} \right)
\]
where \( |0\rangle^{\otimes n} \) and \( |1\rangle^{\otimes n} \) are the states with all qubits in \( |0\rangle \) and \( |1\rangle \), respectively.

We define a set of matrices \( \{A_i^{[n]}\}_{i=1,2} \) for each site \( n \) to represent the local operations associated with each qubit. These matrices will satisfy the conditions required by the main theorem. Define
\begin{equation} \label{eq_A1}
    A_1^{[n]} =  \begin{pmatrix} 1 & 0 \\ 0 & 0 \end{pmatrix}, \quad A_2^{[n]} =  \begin{pmatrix} 0 & 0 \\ 0 & 1 \end{pmatrix}
\end{equation}

These matrices act as diagonal projections and encode the information necessary to distinguish between the two components of the GHZ state.
The matrices $\{ A_{i_n}^{[n]}; n\ge 1, i_n =1,2\}$ satisfy  the consistency condition (\ref{eq_operator_identity}):
\[
\sum_{j=1}^{2} \left( A_{j}^{[n+2]} \right)^{\dagger} \left( A_{i}^{[n+1]} \right)^{\dagger} \otimes A_{i}^{[n+1]} A_{j}^{[n+2]} = \left( A_{i}^{[n+1]} \right)^{\dagger} \otimes A_{i}^{[n+1]}
\]

The quantum channel associated with translation invariant MPS $\{A^{[n]}_{i}\}_i$ is defined on $M = \begin{pmatrix} a & b \\ c & d \end{pmatrix}\in \mathbb{M}_2(\mathbb{C})$ by
$$
\Phi(M) = \begin{pmatrix} a & 0 \\ 0 & d \end{pmatrix}
$$
From Theorem \ref{thm_main}, the limiting state  \( \varphi_{GHZ}  \) according (\ref{eq_state_limit})   exists and its local correlations are given by
\begin{eqnarray*}
    \varphi_{GHZ}(X) &=\sum_{\ell=1}^{2} \sum_{\substack{i_1, i_2, \dots, i_N \\ j_1, j_2, \dots, j_N}} \overline{\operatorname{Tr}\left( A_{i_1}^{[1]} \cdots  A_{i_N}^{[N]}A_{\ell}^{[N+1]} \right)} \operatorname{Tr}\left( A_{j_1}^{[1]} \cdots  A_{j_N}^{[N]} A_{\ell}^{[N+1]} \right)\\
&\times \langle i_1, i_2, \dots, i_n | X | j_1, j_2, \dots, j_n \rangle,
\end{eqnarray*}
The  matrices defined in \eqref{eq_A1}   satisfy:
\[
\operatorname{Tr}\left( A_{i_1}^{[1]} A_{i_2}^{[2]} \cdots A_{i_N}^{[N]} A_{\ell}^{[N+1]}\right) = \frac{1}{\sqrt{2}} \delta_{i_1, i_2, \dots, i_N, \ell}
\]
where
\[
\delta_{i_1, i_2, \dots, i_N} = \begin{cases}
1 & \text{if } i_1 = i_2 = \cdots = i_N =\ell \\
0 & \text{otherwise,}
\end{cases}
\]
It follows that
\begin{equation} \label{eq_phi_GHZ}
\varphi_{GHZ}(X_1 \otimes X_2 \otimes \cdots \otimes X_N) = \frac{1}{2} \sum_{\ell=1}^{2} x_{1;\ell\ell} x_{2;\ell\ell} \cdots x_{N;\ell\ell}
\end{equation}
where \( x_{m;ij} = \langle i | X_m | j \rangle \).

\section{Ergodic and Mixing translation invariant  MPS }\label{Sect_ergMPS}
In this section we introduce the notions of ergodicity and mixing of translation invariant matrix product states based on the formalism of quantum channels. Denote by \(\mathbb{M}_D(\mathbb{C})_+\) the set of all positive semidefinite matrices in \(\mathbb{M}_D(\mathbb{C})\), given by
\[\mathbb{M}_D(\mathbb{C})_+ = \left\{ A \in \mathbb{M}_D(\mathbb{C}) \mid A^\dagger = A, \, A \geq 0 \right\}\]
This set consists of all \(D \times D\) complex Hermitian matrices that are positive semidefinite. Denote by \(\mathfrak{S}_D\) the set of all density matrices, defined as:
\[
\mathfrak{S}_D = \left\{ \rho \in \mathbb{M}_D(\mathbb{C}) \mid \rho^\dagger = \rho, \, \rho \geq 0, \, \operatorname{Tr}(\rho) = 1 \right\}\subset \mathbb{M}_D(\mathbb{C})_+
\]
This set comprises all \( D \times D \) complex matrices that are self-adjoint, positive semi-definite, and possess unit trace, characterizing valid quantum states.

\begin{definition}
A quantum channel $\Phi: \mathbb{M}_D(\mathbb{C}) \to \mathbb{M}_D(\mathbb{C})$ is \emph{ergodic} if it has a unique invariant state $\rho_*$ within the quantum state space $\mathfrak{S}_D$.

Additionally, $\Phi$ is \emph{mixing} if, for any initial state $\rho \in \mathfrak{S}_D$, repeated application of $\Phi$ drives $\rho$ toward $\rho_*$:
\[
\lim_{n \to \infty} \Phi^n(\rho) = \rho_*.
\]
This property ensures that regardless of the starting state, the system eventually stabilizes at $\rho_*$.
\end{definition}

\begin{definition}
A translation-invariant Matrix Product State (MPS), defined by the tensor set $\{A_i\}_i$, is \emph{ergodic} if the associated quantum channel
\[
\Phi: M \mapsto \sum_{i} A_{i}^{ \dagger} M A_{i}
\]
is ergodic. Similarly, the MPS is \emph{mixing} if $\Phi$ is mixing, ensuring that iterates of the channel converge to a unique stationary state.
\end{definition}

These definitions highlight the fundamental role of ergodicity and mixing in quantum dynamics. Ergodicity guarantees that the system possesses a unique long-term behavior, while mixing strengthens this by ensuring convergence from any initial condition. In the context of MPS, these properties determine the stability and uniformity of correlations, making them crucial for understanding entanglement structure and thermodynamic behavior. Ergodic and mixing MPS are particularly relevant in quantum many-body physics, where they describe systems reaching equilibrium under repeated interactions. The connection between these properties and the underlying tensor structure provides a powerful framework for analyzing quantum information processes and statistical mechanics.

 We use the same notations as the previous sections.
Any matrix $M \in \mathbb{M}_D(\mathbb{C})$ can be uniquely decomposed into its real and imaginary components
\( M = \Re(M) + i \Im(M)\), where \( \Re(M) = \frac{1}{2} (M + M^*), \quad \Im(M) = \frac{1}{2i} (M - M^*).
 \) Both $\Re(M)$ and $\Im(M)$ are self-adjoint. Furthermore, any real self-adjoint matrix $M$ can be expressed as the difference of two positive matrices
\(    M = M_+ - M_-, \quad M_+ M_- = 0\), where $M_+$ and $M_-$ have disjoint support projections. We define the trace norm of $M$ as
\(\|M\|_1 = \operatorname{Tr}(M_+) + \operatorname{Tr}(M_-).\)

In the next lemma, we extend this norm to the total variation norm $\|\cdot\|_{TV}$ on $\mathbb{M}_D(\mathbb{C})$, generalizing $\|\cdot\|_1$ to the full algebra. This norm plays a crucial role in our subsequent analysis. The result stated below was established in \cite{AcLuSo21}.

\begin{lemma}
Define the total variation norm for $M \in \mathbb{M}_D(\mathbb{C})$ as
\begin{equation}\label{df-q-tot-var-nrm}
\|M\|_{TV} := \operatorname{Tr} \big(\Re(M)_+ + \Re(M)_-\big) + \operatorname{Tr} \big(\Im(M)_+ + \Im(M)_-\big)
\end{equation}
This defines a norm on $\mathbb{M}_D(\mathbb{C})$ when viewed as a real vector space.
\end{lemma}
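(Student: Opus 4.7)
\medskip

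\noindent\textbf{Proof proposal.} The plan is to verify the three defining axioms of a norm on $\mathbb{M}_D(\mathbb{C})$ regarded as a real vector space. The key structural remark is that the map $M \mapsto (\Re(M),\Im(M))$ is a real-linear bijection from $\mathbb{M}_D(\mathbb{C})$ onto the direct sum of two copies of the self-adjoint matrices, and $\|M\|_{TV}$ is exactly the sum of the trace norms $\|\Re(M)\|_1 + \|\Im(M)\|_1$. Hence it suffices to show that the functional $A\mapsto \operatorname{Tr}(A_+) + \operatorname{Tr}(A_-)$ is a norm on the real space of self-adjoint matrices, and then transport this through the Cartesian decomposition.

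First I would check positive definiteness. Non-negativity is immediate since $A_\pm \geq 0$ and the trace is non-negative on positive matrices. If $\|M\|_{TV}=0$, then $\operatorname{Tr}(\Re(M)_\pm) = \operatorname{Tr}(\Im(M)_\pm) = 0$; since the trace is faithful on $\mathbb{M}_D(\mathbb{C})_+$, one obtains $\Re(M)_\pm = \Im(M)_\pm = 0$, hence $\Re(M) = \Im(M) = 0$, i.e.\ $M = 0$. For real homogeneity with $\lambda \in \mathbb{R}$, note that $\Re(\lambda M) = \lambda \Re(M)$ and $\Im(\lambda M) = \lambda \Im(M)$. For any self-adjoint $A$, the Jordan decomposition behaves as $(\lambda A)_+ = \lambda A_+$ and $(\lambda A)_- = \lambda A_-$ when $\lambda \geq 0$, and $(\lambda A)_+ = |\lambda| A_-$, $(\lambda A)_- = |\lambda| A_+$ when $\lambda < 0$. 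In both cases $\operatorname{Tr}((\lambda A)_+ + (\lambda A)_-) = |\lambda|\operatorname{Tr}(A_+ + A_-)$, yielding $\|\lambda M\|_{TV} = |\lambda|\|M\|_{TV}$.

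The triangle inequality is the step requiring real argument and is the main obstacle. By linearity of $\Re$ and $\Im$, it reduces to showing that for self-adjoint $A,B$,
\begin{equation*}
\operatorname{Tr}\bigl((A+B)_+\bigr) + \operatorname{Tr}\bigl((A+B)_-\bigr) \leq \operatorname{Tr}(A_+) + \operatorname{Tr}(A_-) + \operatorname{Tr}(B_+) + \operatorname{Tr}(B_-).
\end{equation*}
Writing $\operatorname{Tr}(A_+) + \operatorname{Tr}(A_-) = \operatorname{Tr}|A|$ with $|A| = (A^2)^{1/2}$, this is precisely the triangle inequality for the trace norm on self-adjoint matrices. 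I would derive it from the variational characterization
\begin{equation*}
\operatorname{Tr}|A| = \sup\bigl\{\, \operatorname{Tr}(AU) : U \in \mathbb{M}_D(\mathbb{C}),\ U \text{ unitary}\, \bigr\},
\end{equation*}
from which sub-additivity is immediate: for any unitary $U$, $\operatorname{Tr}((A+B)U) = \operatorname{Tr}(AU) + \operatorname{Tr}(BU) \leq \operatorname{Tr}|A| + \operatorname{Tr}|B|$, and taking the supremum over $U$ gives $\operatorname{Tr}|A+B| \leq \operatorname{Tr}|A| + \operatorname{Tr}|B|$.

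Combining the two Cartesian components,
\begin{equation*}
\|M+N\|_{TV} = \|\Re(M)+\Re(N)\|_1 + \|\Im(M)+\Im(N)\|_1 \leq \|M\|_{TV} + \|N\|_{TV},
\end{equation*}
which finishes the verification. The only non-routine ingredient is the trace-norm triangle inequality, and its standard variational-duality proof slots in cleanly; the rest is bookkeeping on the Jordan decomposition.
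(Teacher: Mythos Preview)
The paper does not supply a proof of this lemma in the text: it simply records that the result was established in \cite{AcLuSo21} and moves on. So there is no ``paper's own proof'' to compare against; your proposal fills in what the authors chose to outsource.

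Your argument is correct in outline and in essentially all details. The reduction to the trace norm on self-adjoint matrices via the Cartesian decomposition is the natural route, and the positive-definiteness and real-homogeneity checks are clean. One small imprecision: the variational formula you quote,
\[
\operatorname{Tr}|A| \;=\; \sup\{\,\operatorname{Tr}(AU) : U \text{ unitary}\,\},
\]
is not quite well-posed as written, since $\operatorname{Tr}(AU)$ is generally complex even for self-adjoint $A$ (take $A=\id$, $U=i\,\id$). The standard statement is $\operatorname{Tr}|A| = \sup_{U}\,|\operatorname{Tr}(AU)|$, or equivalently $\sup_{U}\,\Re\operatorname{Tr}(AU)$; alternatively one may restrict to self-adjoint unitaries (reflections), for which $\operatorname{Tr}(AU)$ is real and the supremum is still attained at the sign of $A$. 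With either tweak your sub-additivity step goes through verbatim:
\[
|\operatorname{Tr}((A+B)U)| \le |\operatorname{Tr}(AU)| + |\operatorname{Tr}(BU)| \le \operatorname{Tr}|A| + \operatorname{Tr}|B|,
\]
and taking the supremum gives the desired inequality. This is a cosmetic fix, not a gap in the idea.
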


 The \emph{infimum} (or greatest lower bound) of a family of operators $\{M_i\}_{i \in I}$ is defined as the largest operator $B$ such that $B \leq M_i$ for all $i \in I$. Formally, this is expressed as:
\begin{equation*}\label{definition-inf-pos-ops}
\inf_{i \in I} M_i = \sup \{ B \in \mathbb{M}_D(\mathbb{C}) \mid B \leq M_i \text{ for all } i \in I \},
\end{equation*}
where the inequality $B \leq M_i$ means that $M_i - B$ is a positive operator.

For a quantum channel $\Phi:  \mathbb{M}_D(\mathbb{C}) \to  \mathbb{M}_D(\mathbb{C})$, we define its \textit{Markov-Dobrushin constant} as:
\begin{equation}\label{definition-kappa-q}
\kappa_{\Phi} := \inf \left\{ \Phi(\mathbf{\xi} \mathbf{\xi}^\dagger) \mid \|\mathbf{\xi}\| =1 \right\} \in \mathbb{M}_D(\mathbb{C})_+
\end{equation}
This quantity plays a fundamental role in quantifying the ergodic behavior of quantum channels and is a direct extension of the classical Markov-Dobrushin inequality, which is central in analyzing mixing properties of Markov processes \cite{AcLuSo21}. The following result establishes a quantum Markov-Dobrushin inequality, highlighting the stability and convergence properties of $\Phi$.

\begin{theorem}\label{thm_MD}
Let $\Phi:  \mathbb{M}_D(\mathbb{C}) \to  \mathbb{M}_D(\mathbb{C})$ be a quantum channel. Then, for any $\rho, \sigma \in \mathfrak{S}_{D}$,
\begin{equation}\label{q-MD-ineq}
    \|\Phi(\rho) - \Phi(\sigma)\|_{\text{TV}} \leq (1 - \operatorname{Tr} \kappa_{\Phi}) \|\rho - \sigma\|_{\text{TV}}
\end{equation}
If $\kappa_{\Phi} > 0$, then $\Phi$ has a unique fixed point $\rho_*$ and satisfies
\begin{equation}\label{eq-cv-rate}
    \|\Phi^n(\rho) - \rho_{*}\| \leq 2e^{-n \theta_{\Phi}}, \quad \forall \rho \in \mathfrak{S}(\mathcal{H}),
\end{equation}
where
\begin{equation}
    \theta_{\Phi} = -\ln(1 - \operatorname{Tr} \kappa_{\Phi}) >0.
\end{equation}
\end{theorem}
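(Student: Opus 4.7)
The plan is to establish the contraction inequality (\ref{q-MD-ineq}) by reducing the problem to the Jordan decomposition of the self-adjoint operator $\rho-\sigma$, then obtain existence of a unique fixed point and the exponential convergence rate as direct consequences through Banach-type iteration.

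First, since $\Phi$ is a quantum channel (hence Hermiticity-preserving) and both $\rho,\sigma$ are self-adjoint, the image $\Phi(\rho)-\Phi(\sigma)$ is self-adjoint, so by the definition of $\|\cdot\|_{TV}$ in (\ref{df-q-tot-var-nrm}) the total variation norm reduces to the trace norm $\|\cdot\|_1$ on both sides. I would write the Jordan decomposition $\rho-\sigma = (\rho-\sigma)_+ - (\rho-\sigma)_-$ into disjoint-support positive parts; since $\operatorname{Tr}(\rho-\sigma)=0$, both parts have the same trace $t := \tfrac{1}{2}\|\rho-\sigma\|_{TV}$. Assuming $t>0$ (otherwise the bound is trivial), normalize to get two density matrices $\rho',\sigma'\in \mathfrak{S}_D$ such that $\rho-\sigma = t(\rho'-\sigma')$.

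The crucial step is to show $\Phi(\rho')\geq \kappa_{\Phi}$ and $\Phi(\sigma')\geq \kappa_{\Phi}$. Using the spectral decomposition $\rho'=\sum_k \lambda_k |\xi_k\rangle\langle \xi_k|$ with $\lambda_k\geq 0$ and $\sum_k\lambda_k=1$, linearity of $\Phi$ and the defining inequality $\Phi(\xi\xi^{\dagger})\geq \kappa_{\Phi}$ for every unit vector $\xi$ yield
\begin{equation*}
\Phi(\rho')=\sum_k \lambda_k\,\Phi(|\xi_k\rangle\langle\xi_k|)\geq \Big(\sum_k\lambda_k\Big)\kappa_{\Phi}=\kappa_{\Phi},
\end{equation*}
and similarly for $\sigma'$. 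Consequently, both $\Phi(\rho')-\kappa_{\Phi}$ and $\Phi(\sigma')-\kappa_{\Phi}$ are positive with trace $1-\operatorname{Tr}\kappa_{\Phi}$. Writing $\Phi(\rho)-\Phi(\sigma)=t\bigl[(\Phi(\rho')-\kappa_{\Phi})-(\Phi(\sigma')-\kappa_{\Phi})\bigr]$ and applying the triangle inequality for $\|\cdot\|_1$ (noting that the trace norm of a positive operator equals its trace), each summand contributes at most $t(1-\operatorname{Tr}\kappa_{\Phi})$, which combines to give $\|\Phi(\rho)-\Phi(\sigma)\|_{TV}\leq 2t(1-\operatorname{Tr}\kappa_{\Phi})=(1-\operatorname{Tr}\kappa_{\Phi})\,\|\rho-\sigma\|_{TV}$, which is exactly (\ref{q-MD-ineq}).

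For the second half, if $\kappa_{\Phi}>0$ then $\operatorname{Tr}\kappa_{\Phi}>0$, so $1-\operatorname{Tr}\kappa_{\Phi}\in[0,1)$ and $\Phi$ is a strict contraction on the compact convex metric space $(\mathfrak{S}_D,\|\cdot\|_{TV})$. The Banach fixed point theorem yields a unique $\rho_{*}\in \mathfrak{S}_D$ with $\Phi(\rho_{*})=\rho_{*}$; iterating (\ref{q-MD-ineq}) with $\sigma=\rho_{*}$ and using the uniform bound $\|\rho-\rho_{*}\|_{TV}\leq 2$ for density matrices gives $\|\Phi^{n}(\rho)-\rho_{*}\|\leq 2(1-\operatorname{Tr}\kappa_{\Phi})^{n}=2e^{-n\theta_{\Phi}}$, establishing (\ref{eq-cv-rate}). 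I expect the main subtle point to be the justification of the operator inequality $\Phi(\rho')\geq \kappa_{\Phi}$: it requires interpreting the infimum in (\ref{definition-kappa-q}) correctly as a greatest lower bound in the Löwner order and verifying that convex combinations of pure states propagate this bound through the linearity of $\Phi$. The rest of the argument is then essentially the standard classical Dobrushin coupling argument rewritten in operator-algebraic language.
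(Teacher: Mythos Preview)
The paper does not actually prove this theorem; its proof environment contains only the citation ``See \cite{AcLuSo21, SB24}.'' Your argument is correct and is precisely the standard quantum Dobrushin contraction argument that those references develop: Jordan-decompose $\rho-\sigma$, normalize the positive and negative parts to states $\rho',\sigma'$, use $\kappa_{\Phi}\le\Phi(\xi\xi^{\dagger})$ together with the spectral decomposition to get $\Phi(\rho'),\Phi(\sigma')\ge\kappa_{\Phi}$, subtract the common minorant, and apply the triangle inequality. The fixed-point and rate statements then follow from Banach iteration exactly as you outline, with the diameter bound $\|\rho-\rho_{*}\|_{TV}\le 2$ supplying the prefactor. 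The only point worth flagging is existence of the L\"owner infimum in \eqref{definition-kappa-q}, which you correctly identify; for the proof itself one only needs that $\kappa_{\Phi}$ is \emph{some} positive lower bound, not that it is the greatest one, so this does not affect validity.
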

\begin{proof}
  See \cite{AcLuSo21, SB24}.
\end{proof}

The parameter $\theta_{\Phi}$ governs the speed of convergence, with larger values indicating  faster mixing. In particular, when $\kappa_{\Phi}$ is strictly positive, the channel rapidly forgets initial conditions, making it crucial for quantum information processing, thermalization, and quantum memory effects. This result provides a rigorous foundation for analyzing the stability and efficiency of quantum channels in applications such as quantum error correction, quantum Markov chains, and quantum communication protocols.

In  the translation-invariant systems, the matrices \( A_{i_k}^{[k]} \) does not depend on the site \( k \).  In this case the quantum channels $\Phi_n$ given by (\ref{eq-Phin}) coincide with the quantum channel
\begin{equation}\label{eq_M_nonTI}
\Phi(M) = \sum_{i \in I} A_i  M A_i^{\dagger},
\end{equation}
where \( K_i^{[k]} = A_i^{[k]*} \), defines a completely positive trace-preserving (CPTP) quantum channel acting on \( \mathbb{M}_d(\mathbb{C}) = \mathcal{B}(\mathcal{H}) \), where \( \mathcal{H} \) is a \( d \)-dimensional Hilbert space.

Consider the map
\begin{equation}\label{eq-Phi}
  \Phi(M) = \sum_{i}A_{i}^{\dagger}MA_{i},\qquad M\in \mathbb{M}_D(\mathbb{C})
\end{equation}

The dynamics of the non-translation-invariant quantum system associated with the MPS is governed by the sequence of states \( \{\varphi_n\}_{n} \).

For any local observable \( X \in \mathcal{B}_{[1,N]} \), the associated map is:
\begin{equation}\label{eq_Xhat_nonTI}
\hat{X}(M) = \sum_{\substack{i_1, \ldots, i_N \\ j_1, \ldots, j_N}} \big\langle i_1, \ldots, i_N \big| X \big| j_1, \ldots, j_N \big\rangle A_{i_N}^{[N]\dagger} \cdots A_{i_1}^{[1]\dagger} M A_{j_1}^{[1]} \cdots A_{j_N}^{[N]}.
\end{equation}

Given the gauge condition in \eqref{eq_sumAkAkd1}, if \( X = \id_{[1,N]} \), the map \( \hat{X} \) coincides with the quantum channel \( \Phi^N \).

\begin{lemma}\label{lem_limPhin}
Consider a quantum channel \( \Phi : \mathbb{M}_D(\mathbb{C}) \to \mathbb{M}_D(\mathbb{C}) \). The There exists a unique density matrix \( \rho_* \in \mathfrak{S}_D \) that satisfies the following asymptotic property:
for any \( A \in \mathbb{M}_D(\mathbb{C}) \), the iterated application of \( \Phi \) converges to
\begin{equation}\label{eq_limMn}
    \lim_{n \to \infty} \Phi^{n} = \Phi_{*}: M\in \mathbb{M}_D(\mathbb{C}) \mapsto \Phi_{*}(M) =  \mathrm{Tr}(M)\rho_*
\end{equation}
Here, \( \rho_* \) represents the unique stationary state of \( \Phi \), serving as its fixed point.
\end{lemma}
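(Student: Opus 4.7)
The plan is to leverage Theorem \ref{thm_MD}, which, under the (implicit) hypothesis $\kappa_\Phi>0$ underlying this section on ergodic and mixing channels, already yields a unique invariant density matrix $\rho_*\in\mathfrak{S}_D$ together with the exponential convergence bound $\|\Phi^n(\rho)-\rho_*\|\le 2e^{-n\theta_\Phi}$ valid for every state $\rho$. The remaining task is therefore to upgrade this convergence from $\mathfrak{S}_D$ to the entire matrix algebra $\mathbb{M}_D(\mathbb{C})$ and to identify the limit with the rank-one superoperator $\Phi_*(M)=\operatorname{Tr}(M)\rho_*$.

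The first step is a decomposition argument based on the constructions recalled just before equation \eqref{df-q-tot-var-nrm}. Given an arbitrary $M\in\mathbb{M}_D(\mathbb{C})$, I would write
\begin{equation*}
M=\Re(M)_+-\Re(M)_- +i\bigl(\Im(M)_+-\Im(M)_-\bigr),
\end{equation*}
where each of the four summands is a positive semidefinite matrix. For any nonzero $A\in\mathbb{M}_D(\mathbb{C})_+$ the normalization $\rho_A:=A/\operatorname{Tr}(A)$ lies in $\mathfrak{S}_D$, so by linearity and Theorem \ref{thm_MD},
\begin{equation*}
\Phi^n(A)=\operatorname{Tr}(A)\,\Phi^n(\rho_A)\xrightarrow[n\to\infty]{}\operatorname{Tr}(A)\,\rho_*,
\end{equation*}
with the convergence being exponentially fast in the total-variation (hence in any) norm on $\mathbb{M}_D(\mathbb{C})$, since all norms are equivalent in finite dimension.

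The second step is to recombine the four pieces. Applying the previous convergence to each of $\Re(M)_\pm$ and $\Im(M)_\pm$, linearity of $\Phi^n$ and of the trace give
\begin{equation*}
\Phi^n(M)\longrightarrow\bigl(\operatorname{Tr}\Re(M)_+-\operatorname{Tr}\Re(M)_-+i\operatorname{Tr}\Im(M)_+-i\operatorname{Tr}\Im(M)_-\bigr)\rho_*=\operatorname{Tr}(M)\,\rho_*,
\end{equation*}
which is the desired identity $\lim_n\Phi^n=\Phi_*$. Uniqueness of $\rho_*$ is inherited directly from Theorem \ref{thm_MD}: any other stationary density matrix $\rho_*'$ would satisfy $\rho_*'=\Phi^n(\rho_*')\to\rho_*$, forcing $\rho_*'=\rho_*$.

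The only substantive subtlety, and the one worth flagging as the main obstacle, is the hypothesis: as stated the lemma does not display the assumption $\kappa_\Phi>0$ (or equivalent ergodicity/mixing hypothesis), but without it the conclusion fails for channels with multiple fixed points such as the identity map. I would therefore explicitly invoke the ergodic/mixing setting that governs this section, after which the proof reduces to the two linear-extension steps above. No delicate spectral analysis is required because the Markov--Dobrushin contraction in Theorem \ref{thm_MD} already does the heavy lifting.
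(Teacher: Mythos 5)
Your argument is correct, but it takes a different route from the paper: the paper does not prove this lemma at all, it simply cites \cite{SN24, W12}, where the statement is obtained from the standard spectral/Perron--Frobenius theory of quantum channels (convergence of $\Phi^n$ to the projection onto the fixed-point space for primitive channels). You instead derive it internally from Theorem \ref{thm_MD}: the Markov--Dobrushin contraction gives the unique invariant state $\rho_*$ and exponential convergence $\Phi^n(\rho)\to\rho_*$ on $\mathfrak{S}_D$, and your decomposition $M=\Re(M)_+-\Re(M)_-+i\bigl(\Im(M)_+-\Im(M)_-\bigr)$ together with linearity and $\operatorname{Tr}(M)=\operatorname{Tr}\Re(M)_+-\operatorname{Tr}\Re(M)_-+i\bigl(\operatorname{Tr}\Im(M)_+-\operatorname{Tr}\Im(M)_-\bigr)$ correctly upgrades this to $\Phi^n(M)\to\operatorname{Tr}(M)\rho_*$ for all $M$, hence to convergence of the superoperators (pointwise convergence suffices in finite dimension); the uniqueness argument via $\rho_*'=\Phi^n(\rho_*')\to\rho_*$ is also fine. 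What your approach buys is self-containedness and an explicit exponential rate inherited from $\theta_\Phi$, at the price of requiring $\kappa_\Phi>0$, which is sufficient but not necessary for mixing (the cited spectral route works under the weaker primitivity assumption). Your flag about the hypothesis is on target: as printed, the lemma asserts the conclusion for an arbitrary channel, which is false (the identity channel, or any channel with several invariant states or peripheral eigenvalues, is a counterexample), so some ergodicity/mixing assumption must be read into the statement; in the only place the lemma is used, Theorem \ref{thm_ergMPS}, the hypothesis $\kappa_\Phi>0$ is indeed assumed, so your version of the lemma is exactly what the application needs.
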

\begin{proof}
See \cite{SN24, W12}.
\end{proof}
\begin{theorem}\label{thm_ergMPS}
Under the given notations, if the Markov-Dobrushin constant \(\kappa_{\Phi}\) of the quantum channel \(\Phi\), defined by (\ref{eq-Phi}), satisfies \(\kappa_{\Phi} > 0\), then for any local observable \( X \in \mathcal{B}_{[1,N]} \), the sequence \(\{\varphi_n\}\) converges to a unique limiting state \(\varphi\), given by:
\begin{equation}\label{lim-phi-hilbert}
    \varphi(X) = \sum_{\alpha,\beta =1}^{D}e_{\alpha}^{\dagger} \rho_{*}e_{\beta} \Tr\big(\hat{X}( e_{\alpha}e_{\beta}^{\dagger})\big),
\end{equation}

where \( \rho_* \) is the unique invariant state of \( \Phi \) within the set of density operators \( \mathfrak{S}_{D} \).
\end{theorem}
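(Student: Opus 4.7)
The plan is to express $\varphi_{N+k}(X)$ as a ratio of two superoperator traces built from iterates of the single channel $\Phi$, and then pass to the limit using the mixing property guaranteed by $\kappa_\Phi > 0$. First, I would specialize the computation from the proof of Theorem~\ref{thm_main} to the translation-invariant setting: all site-dependent channels $\Phi_n$ collapse to the single channel $\Phi$ of \eqref{eq-Phi}, the normalization becomes $\mathcal{N}(n) = \Tr(\Phi^n)$, and for $X \in \mathcal{B}_{[1,N]}$ a direct bookkeeping of the indices (using the expression of $\hat{X}$ in \eqref{eq-hatX}) yields
\begin{equation*}
\varphi_{N+k}(X) = \frac{\Tr\bigl(\Phi^{k+1} \circ \hat{X}\bigr)}{\Tr\bigl(\Phi^{N+k+1}\bigr)},
\end{equation*}
where the superoperator trace is the one defined in Section~\ref{Sect_QLA}.

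The second step is to upgrade the convergence of states to convergence of the superoperator. Under $\kappa_\Phi > 0$, Theorem~\ref{thm_MD} furnishes a unique stationary density $\rho_* \in \mathfrak{S}_D$ together with the exponential decay $\|\Phi^n(\rho) - \rho_*\| \le 2 e^{-n \theta_\Phi}$ valid for every state $\rho$. By linearity, combined with the decomposition $M = (\Re M)_+ - (\Re M)_- + i[(\Im M)_+ - (\Im M)_-]$ and the renormalization of each positive piece to a density matrix, this pointwise convergence extends to all of $\mathbb{M}_D(\mathbb{C})$, giving $\Phi^n(M) \to \Tr(M)\,\rho_* = \Phi_*(M)$, which is exactly the content of Lemma~\ref{lem_limPhin}.

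Finally, I would substitute into the explicit form of the superoperator trace. Using $\Tr(\Psi) = \sum_{\alpha,\beta} \langle e_\alpha | \Psi(e_\alpha e_\beta^\dagger) | e_\beta \rangle$, the convergence of $\Phi^{k+1}$ computed pointwise on each $\hat{X}(e_\alpha e_\beta^\dagger)$ gives
\begin{equation*}
\Tr\bigl(\Phi^{k+1} \circ \hat{X}\bigr) = \sum_{\alpha,\beta} \bigl\langle e_\alpha \bigm| \Phi^{k+1}\bigl(\hat{X}(e_\alpha e_\beta^\dagger)\bigr) \bigm| e_\beta \bigr\rangle \xrightarrow[k\to\infty]{} \sum_{\alpha,\beta} \Tr\bigl(\hat{X}(e_\alpha e_\beta^\dagger)\bigr)\, \langle e_\alpha | \rho_* | e_\beta \rangle,
\end{equation*}
while the denominator reduces to $\sum_{\alpha,\beta} \delta_{\alpha\beta} \langle e_\alpha|\rho_*|e_\beta\rangle = \Tr(\rho_*) = 1$. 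Taking the ratio and rewriting $\langle e_\alpha|\rho_*|e_\beta\rangle = e_\alpha^\dagger \rho_* e_\beta$ gives exactly \eqref{lim-phi-hilbert}. The main obstacle is the clean justification of the extension of pointwise convergence from $\mathfrak{S}_D$ to all of $\mathbb{M}_D(\mathbb{C})$ and the subsequent interchange of the $k\to\infty$ limit with the finite sum in the superoperator trace; both rely on finite-dimensionality together with the exponential estimate from Theorem~\ref{thm_MD}, so they are technical but routine once the decomposition into positive parts is carried out carefully.
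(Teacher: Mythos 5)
Your proposal is correct and follows essentially the same route as the paper: rewrite $\varphi_{N+k}(X)$ as a ratio of superoperator traces of iterates of $\Phi$ composed with $\hat{X}$, invoke $\kappa_\Phi>0$ via Theorem~\ref{thm_MD} to get $\Phi^n\to\Phi_*=\Tr(\cdot)\,\rho_*$ (the content of Lemma~\ref{lem_limPhin}, which you re-derive rather than cite), and evaluate the trace entrywise, with the denominator tending to $\Tr(\rho_*)=1$. The only differences are cosmetic: slightly more careful index bookkeeping and an explicit justification of the extension of convergence from $\mathfrak{S}_D$ to all of $\mathbb{M}_D(\mathbb{C})$, both of which the paper leaves implicit.
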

\begin{proof}
For any observable $X \in \mathcal{B}_{[1,N]}$, we express its expectation under iterated channel dynamics as:
\[
\Tr\big(\Phi^n\circ \hat{X} \big) = \sum_{\alpha, \beta =1}^{D} e_{\alpha}^{\dagger} \Phi^n\big(\hat{X}(e_{\alpha}e_{\beta}^{\dagger})\big)e_{\beta}.
\]
Since $\kappa_{\Phi} > 0$ by assumption, Theorem \ref{thm_MD} ensures that $\Phi$ is mixing. Applying Lemma \ref{lem_limPhin}, the limiting channel $\Phi_*$ satisfies:
\[
\lim_{n \to \infty} \Tr(\Phi^n\circ \hat{X}) = \Tr(\Phi_*\circ \hat{X}) = \sum_{\alpha, \beta =1}^{D}e_{\alpha}^{\dagger} \Phi_*\big( \hat{X}(e_{\alpha}e_{\beta}^{\dagger})\big)e_{\beta}.
\]
Utilizing the form of $\Phi_*$ from \eqref{eq_limMn}, we obtain:
\[
\lim_{n \to \infty} \Tr(\Phi^n\circ \hat{X}) = \sum_{\alpha, \beta =1}^{D} \Tr\big( \hat{X}(e_{\alpha}e_{\beta}^{\dagger})\big)e_{\alpha}^{\dagger}\rho_{*}e_{\beta}
\]
For the special case $X = \id_{[1,N]}$, where $\hat{X} = \Phi^N$, the normalizing constant satisfies:
\[
\mathcal{N}(N+n) = \Tr(\Phi^{N+n}) \longrightarrow \Tr(\Phi_{*}) = 1
\]
Consequently, the expectation value of $X$ in the evolved state converges as:
\[
\lim_{n\to\infty}\varphi_{n+N}(X) = \lim_{n\to\infty} \frac{\Tr(\Phi^{n}\circ\hat{X})}{\mathcal{N}(N+n)} = \sum_{\alpha, \beta =1}^{D}e_{\alpha}^{\dagger}\Phi_{*}\big(\hat{X}(e_{\alpha}e_{\beta}^{\dagger})\big)e_{\beta}
\]
Thus, the sequence of states $\{\varphi_n\}$ converges to the steady state $\varphi$, demonstrating thermalization of the system under repeated interactions.
\end{proof}

This result establishes a fundamental property of quantum dynamical systems: the tendency of a quantum channel to drive the system towards a unique equilibrium state. The proof highlights how ergodicity and mixing play a crucial role in ensuring the eventual stabilization of MPS, regardless of initial conditions.
Furthermore, the structure of the limiting dynamics, encoded in the stationary channel $\Phi_*$, offers insights into the underlying symmetries and invariant subspaces governing the system's behavior.

\section{Depolarizing Matrix Product States}\label{sect_depoMPS}

In quantum information theory, the depolarizing channel serves as a fundamental model for quantum noise, representing a process where a quantum state is replaced by a maximally mixed state with a certain probability. This section delves into the construction of Matrix Product States (MPS) associated with the depolarizing channel, elucidating their structure and significance.

Consider the Hilbert space $\mathbb{C}^2$ with its standard basis vectors:
\[
e_1 = \begin{pmatrix} 1 \\ 0 \end{pmatrix}, \quad e_2 = \begin{pmatrix} 0 \\ 1 \end{pmatrix}.
\]

The Pauli matrices, which form a basis for the space of $2 \times 2$ Hermitian matrices, are defined as:
\[
I = \begin{pmatrix} 1 & 0 \\ 0 & 1 \end{pmatrix}, \quad \sigma_x = \begin{pmatrix} 0 & 1 \\ 1 & 0 \end{pmatrix}, \quad \sigma_y = \begin{pmatrix} 0 & -i \\ i & 0 \end{pmatrix}, \quad \sigma_z = \begin{pmatrix} 1 & 0 \\ 0 & -1 \end{pmatrix}.
\]

The depolarizing channel $\Phi$ is a quantum channel that transforms a density matrix $\rho$ into a mixture of itself and the maximally mixed state, effectively modeling the loss of information in a quantum system. For a parameter $p \in [0,1]$, the action of $\Phi$ on a matrix $M \in \mathbb{M}_2(\mathbb{C})$ is given by:
\[
\Phi(M) = (1 - p) M + \frac{p}{3} (\sigma_x M \sigma_x^\dagger + \sigma_y M \sigma_y^\dagger + \sigma_z M \sigma_z^\dagger).
\]

This can be expressed using Kraus operators $\{A_k\}_{k=0}^3$ as:
\[
\Phi(M) = \sum_{k=0}^{3} A_k M A_k^\dagger,
\]
where
\[
A_0 = \sqrt{1 - p}\, I, \quad A_1 = \sqrt{\frac{p}{3}}\, \sigma_x, \quad A_2 = \sqrt{\frac{p}{3}}\, \sigma_y, \quad A_3 = \sqrt{\frac{p}{3}}\, \sigma_z.
\]

These operators satisfy the completeness relation:
\[
\sum_{k=0}^{3} A_k^\dagger A_k = I
\]
ensuring that $\Phi$ is trace-preserving. The $N$-site MPS associated with these tensors is constructed as:
\[
|\psi_N\rangle = \sum_{i_1, i_2, \ldots, i_N=0}^{3} \operatorname{Tr}(A_{i_1} A_{i_2} \cdots A_{i_N}) |i_1, i_2, \ldots, i_N\rangle,
\]
where $\{|i\rangle\}_{0 \leq i \leq 3}$ denotes the standard basis of the  4-dimensional Hilbert space $\mathcal{H} \equiv \mathbb{C}^4$.

This MPS encapsulates the effects of the depolarizing channel across multiple sites, providing a framework to analyze how local noise influences the global properties of quantum states. The structure of the MPS reflects the interplay between the unitary evolution represented by the Pauli matrices and the probabilistic noise parameterized by $p$.

%Understanding the properties of such depolarizing MPS is crucial for simulating noisy quantum circuits and for developing error correction strategies in quantum computing. By studying these states, researchers can gain insights into the resilience of quantum information and devise methods to mitigate the detrimental effects of noise in quantum systems.
The space of observables is defined by the algebra $\mathcal{B} = \mathcal{B}(\mathbb{C}^4)$, and its corresponding infinite tensor product structure is given by:
\[
\mathcal{B}_{\mathbb{N}} = \bigotimes_{n \in \mathbb{N}} \mathcal{B}(\mathbb{C}^4).
\]

The sequence of states associated with the depolarizing Matrix Product State (MPS), as formulated in (\ref{eq_varphi}), is expressed as:
\[
\varphi_n(X) = \frac{\big\langle \Psi_{n+1} \big| X \otimes \id_{n+1} \big| \Psi_{n+1} \big\rangle}{\mathcal{N}(n+1)},
\]
where $X$ is a local observable acting on the first $n$ sites, and $\mathcal{N}(n+1)$ is the normalization factor ensuring proper state definition.

\begin{theorem}
The sequence of states $\{\varphi_n\}$ associated with the depolarizing MPS converges in the weak-* topology to a limiting state $\varphi$ on $\mathcal{B}_{\mathbb{N}}$. This limiting state, when evaluated on a local observable $X$, is given by:
\begin{equation}\label{eq_depMPS}
\varphi(X) = \frac{1}{2} \sum_{\substack{i_1,\dots,i_{N} \\ j_1,\dots,j_{N}}} \big\langle i_1,\ldots, i_N \big| X \big| j_1,\ldots, j_N \big\rangle \operatorname{Tr}\big(A_{i_N}^{[N]\, \dagger} \cdots A_{i_1}^{[1]\, \dagger} A_{j_1}^{[1]} \cdots A_{j_N}^{[N]}\big),
\end{equation}
for all $X \in \mathcal{B}_{[1,N]}$.

This result establishes that the depolarizing MPS defines a well-posed infinite-volume quantum state, capturing its asymptotic structure through the behavior of its local restrictions.
\end{theorem}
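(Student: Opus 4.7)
The theorem is essentially a direct application of Theorem \ref{thm_ergMPS} to the translation-invariant depolarizing case. Since the Kraus operators $A_0, A_1, A_2, A_3$ do not depend on the site, the Heisenberg channel from \eqref{eq-Phi} reduces to a single quantum channel $\Phi:\mathbb{M}_2(\mathbb{C})\to\mathbb{M}_2(\mathbb{C})$. The plan is to (i) verify $\kappa_\Phi>0$ so that Theorems \ref{thm_MD} and \ref{thm_ergMPS} apply, (ii) identify the unique invariant state $\rho_*$, and (iii) insert $\rho_*$ into \eqref{lim-phi-hilbert} and simplify to reach \eqref{eq_depMPS}.

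For step (i), I would first invoke the Pauli identity $\sigma_x M \sigma_x + \sigma_y M \sigma_y + \sigma_z M \sigma_z = 2\operatorname{Tr}(M)\,I - M$ to rewrite the Kraus form as
\[
\Phi(M) = \Bigl(1-\tfrac{4p}{3}\Bigr)M + \tfrac{2p}{3}\operatorname{Tr}(M)\,I.
\]
For any unit vector $\xi\in\mathbb{C}^2$ this gives $\Phi(\xi\xi^\dagger)=(1-\tfrac{4p}{3})\xi\xi^\dagger+\tfrac{2p}{3}I$, whose two eigenvalues are $1-\tfrac{2p}{3}$ and $\tfrac{2p}{3}$. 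Interpreting the infimum in \eqref{definition-kappa-q} as an operator-valued greatest lower bound and exploiting the fact that orthogonal directions exchange under the choice of $\xi$, I expect to obtain $\kappa_\Phi=\min\{\tfrac{2p}{3},\,1-\tfrac{2p}{3}\}\cdot I$, which is strictly positive for every $p\in(0,1]$.

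For step (ii), the same explicit form of $\Phi$ above yields $\Phi(\tfrac{1}{2}I)=\tfrac{1}{2}I$ by direct verification, and uniqueness is automatic from the mixing property guaranteed by Theorem \ref{thm_MD} once $\kappa_\Phi>0$. Thus $\rho_*=\tfrac{1}{2}I$. Substituting into \eqref{lim-phi-hilbert} yields $e_\alpha^\dagger\rho_* e_\beta = \tfrac{1}{2}\delta_{\alpha\beta}$, so by linearity of $\hat{X}$ and the completeness relation $\sum_{\alpha}e_\alpha e_\alpha^\dagger = I$ we get
\[
\varphi(X) \;=\; \tfrac{1}{2}\sum_{\alpha=1}^{2}\operatorname{Tr}\!\bigl(\hat{X}(e_\alpha e_\alpha^\dagger)\bigr) \;=\; \tfrac{1}{2}\operatorname{Tr}\!\bigl(\hat{X}(I)\bigr).
\]
Expanding $\hat{X}(I)$ via its definition \eqref{eq-hatX} reproduces precisely the right-hand side of \eqref{eq_depMPS}. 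Weak-$*$ convergence of $\{\varphi_n\}$ on the quasi-local algebra $\mathcal{B}_{\mathbb{N}}$ then follows automatically, since $\mathcal{B}_{\mathbb{N}}$ is the C$^*$-closure of $\bigcup_N \mathcal{B}_{[1,N]}$ and convergence of expectation values is assured on each $\mathcal{B}_{[1,N]}$.

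The principal obstacle is step (i): the infimum in \eqref{definition-kappa-q} must be computed as an operator-valued supremum of lower bounds rather than a scalar minimum, and one must be careful in the regime $p>3/4$ where the coefficient $1-\tfrac{4p}{3}$ changes sign and the smallest eigenvalue of $\Phi(\xi\xi^\dagger)$ shifts from the direction orthogonal to $\xi$ to the direction of $\xi$ itself. The remaining steps are routine algebraic manipulations built on the machinery already established in Sections \ref{sect_MPSH} and \ref{Sect_ergMPS}.
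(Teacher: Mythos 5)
Your proposal is correct and follows essentially the same route as the paper: verify $\kappa_\Phi>0$, invoke Theorem \ref{thm_ergMPS} with $\rho_*=\tfrac12 I$, reduce \eqref{lim-phi-hilbert} to $\tfrac12\operatorname{Tr}(\hat X(I))$, and expand via \eqref{eq-hatX}. Your Pauli-identity rewriting $\Phi(M)=(1-\tfrac{4p}{3})M+\tfrac{2p}{3}\operatorname{Tr}(M)I$ and the resulting value $\kappa_\Phi=\min\{\tfrac{2p}{3},1-\tfrac{2p}{3}\}I$ are in fact slightly more careful than the paper's uniform claim $\kappa_\Phi=\tfrac{2p}{3}I$ (which is only the correct greatest lower bound for $p\le 3/4$), and your restriction to $p\in(0,1]$ correctly excludes the non-mixing case $p=0$ that the paper overlooks.
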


\begin{proof}
The quantum channel $\Phi$ satisfies both the completely positive trace-preserving (CPTP) property and the completely positive identity-preserving (CPIP) condition, making it a bistochastic channel. For a given density matrix:
\[
\rho = \begin{pmatrix} \rho_{11} & \rho_{12} \\ \rho_{21} & \rho_{22} \end{pmatrix},
\]
the action of the quantum channel $\Phi$ on $\rho$ is expressed as:
\[
\Phi(\rho) = \begin{pmatrix} (1- \frac{2}{3}p)\rho_{11} + \frac{2}{3}p \rho_{22} & (1- \frac{4}{3}p)\rho_{12} \\
(1-\frac{4}{3}p)\rho_{21} & (1- \frac{2}{3}p)\rho_{22} + \frac{2}{3}p \rho_{11} \end{pmatrix}.
\]

The associated quantum Markov-Dobrushin constant is given by:
\[
\kappa_{\Phi} = \inf\left\{\Phi( \xi\xi^{\dagger}) : \xi \in \mathbb{C}^2, \|\xi\| = 1 \right\} = \frac{2p}{3} I.
\]

Two distinct cases arise:
\begin{itemize}
\item If $p \neq \frac{3}{4}$, the quantum channel $\Phi$ is mixing. For any initial state $\rho \in \mathfrak{S}_2$, the iterates $\Phi^n(\rho)$ converge to the unique fixed point
\(
\rho_* = \frac{1}{2} I.
\)

\item If $p = \frac{3}{4}$, then for every $\rho \in \mathfrak{S}_2$, we have:
\(\Phi(\rho) = \frac{1}{2} I\). This implies that $\Phi$ is stationary. In this case, the qubit channel coincides with the completely depolarizing channel, which maps any state to the maximally mixed state.
\end{itemize}
The limiting quantum channel of $\Phi^n$ is then  the completely depolarizing channel defined as:
\begin{equation}\label{eq-Omega}
\Omega(M) = \frac{1}{2} \operatorname{Tr}(M) I
\end{equation}
According to Theorem \ref{thm_ergMPS}, the limiting matrix product state on the quantum spin system $\mathcal{B}_{\mathbb{N}}$, as given by equation (\ref{eq_limMn}), satisfies:
\[
\varphi(X) = \frac{1}{2} \sum_{\alpha,\beta =1}^{D} e_{\alpha}^{\dagger} \rho_{*} e_{\beta} \operatorname{Tr}\big(\hat{X}( e_{\alpha} e_{\beta}^{\dagger})\big) = \frac{1}{2}\operatorname{Tr}\big(\hat{X}(I)\big),
\]
for every local observable $X \in \bigotimes_{[1,N]} \mathcal{B}(\mathbb{C}^4)$. From equation (\ref{eq-hatX}), one obtains (\ref{eq_depMPS}).
\end{proof}

This formulation provides a rigorous characterization of the limiting matrix product state under the depolarizing channel, emphasizing its mixing properties and its role in quantum spin systems.

\subsection{Non-Projectivity of the State Sequence \texorpdfstring{$\{\varphi_n\}_n$}{\{varphi\_n\}\_n}}

In the context of matrix product states (MPS), it is crucial to examine whether the sequence of states $\{\varphi_n\}_n$ exhibits the projective property. A sequence is termed projective if the limiting state $\varphi$, when restricted to the algebra $\mathcal{B}_{[1,n]}$, coincides with the finite state $\varphi_n$ for all $n$. However, our analysis reveals that $\{\varphi_n\}_n$ does not possess this property, even in the simplest case where $n=1$.

To illustrate this, consider the observable $X = |0\rangle\langle0| \in \mathcal{B}_1$. We begin by computing the normalization factor $\mathcal{N}(2)$ for $N=2$:
\[
\mathcal{N}(2) = \sum_{i_1, i_2=0}^{3} \left| \operatorname{Tr}(A_{i_1} A_{i_2}) \right|^2 = 4 \left( (1 - p)^2 + \frac{p^2}{3} \right),
\]
where $A_{i}$ are the Kraus operators associated with the depolarizing channel, and $p$ is the depolarization parameter.

Next, we evaluate the expectation value $\langle \psi_{2} | X \otimes \mathbb{I}_2 | \psi_{2} \rangle$:
\[
\langle \psi_{2} | X \otimes \mathbb{I}_2 | \psi_{2} \rangle = \sum_{i_1, j_1, j} \operatorname{Tr}(A_{i_1} A_{j}) \overline{\operatorname{Tr}(A_{j_1} A_{j})} \, \, \langle i_1 | X | j_1 \rangle = 4 (1 - p)^2
\]
The state $\varphi_1$ applied to $X$ is then:
\[
\varphi_{1}(X) = \frac{\langle \psi_{2} | X \otimes \mathbb{I}_2 | \psi_{2} \rangle}{\mathcal{N}(2)} = \frac{(1 - p)^2}{(1 - p)^2 + \frac{p^2}{3}}
\]
On the other hand, the limiting state $\varphi$ applied to $X$ yields:
\[
\varphi(X) = \frac{1}{2} \sum_{i_1, j_1} \langle i_1 | X | j_1 \rangle \operatorname{Tr}(A_{i_1}^{\dagger} A_{j_1}) = 1 - p.
\]

Clearly, $\varphi(X) \neq \varphi_1(X)$, demonstrating that the sequence $\{\varphi_n\}_n$ is not projective. This non-projectivity indicates that, unlike quantum Markov chains (QMCs) and finitely correlated states (FCS), matrix product states on quantum spin chains do not form a subclass of projective states. This distinction has significant implications for the analysis and simulation of quantum spin systems, as it suggests that MPS can capture a broader range of correlations and entanglement structures that projective states cannot encompass.

\end{document}